\newtheorem{theorem}{Theorem}[section]
\newtheorem{proposition}[theorem]{Proposition}
\newtheorem{lemma}[theorem]{Lemma}
\newtheorem{corollary}[theorem]{Corollary}
\newtheorem{remarks}{Remark}[section]
\newtheorem{example}{Example}[section]
\newtheorem{definition}{Definition}[section]
\theoremstyle{nonumberplain}
\newtheorem{proof}{Proof}
\DeclareMathOperator{\level}{level}
\DeclareMathOperator{\supp}{supp}
\DeclareMathOperator{\algwang}{TriDecWang}
\DeclareMathOperator{\pop}{\sf pop}
\DeclareMathOperator{\ini}{ini}
\DeclareMathOperator{\lv}{lv}
\DeclareMathOperator{\red}{red}
\DeclareMathOperator{\tail}{tail}
\DeclareMathOperator{\prem}{prem}
\newcommand{\field}[1]{\mathbb{#1}}
\newcommand{\pset}[1]{\mathcal{#1}}
\newcommand{\point}[1]{\bm{#1}}
\newcommand{\p}[1]{\point{#1}}
\newcommand{\qnum}{\mathbb{Q}}
\newcommand{\kxring}{\field{K}[\point{x}]}
\newcommand{\kx}{\kxring}
\newcommand{\grobner}{Gr\"{o}bner }
\DeclareMathOperator{\zero}{\mathsf{Z}}
\newcommand{\fk}{\field{K}}
\begin{document}
\title{On the chordality of polynomial sets in triangular decomposition in top-down style\footnote{This work was partially supported by the National Natural Science Foundation of China (NSFC 11401018 and 11771034)}}

\author{Chenqi Mou and Yang Bai}
\affil{LMIB -- School of Mathematics and Systems Science / \authorcr
Beijing Advanced Innovation Center for Big Data and Brain Computing \authorcr
Beihang University, Beijing 100191, China\authorcr
\{chenqi.mou, yangbai\}@buaa.edu.cn}

\date{}

\maketitle

\begin{abstract}
In this paper the chordal graph structures of polynomial sets appearing in triangular decomposition in top-down style are studied when the input polynomial set to decompose has a chordal associated graph. In particular, we prove that the associated graph of one specific triangular set computed in any algorithm for triangular decomposition in top-down style is a subgraph of the chordal graph of the input polynomial set and that all the polynomial sets including all the computed triangular sets appearing in one specific simply-structured algorithm for triangular decomposition in top-down style (Wang's method) have associated graphs which are subgraphs of the the chordal graph of the input polynomial set. These subgraph structures in triangular decomposition in top-down style are multivariate generalization of existing results for Gaussian elimination and may lead to specialized efficient algorithms and refined complexity analyses for triangular decomposition of chordal polynomial sets. 
\end{abstract}

\noindent{\small {\bf Key words: }}Chordal graph, triangular decomposition, top-down style, Wang's method

\section{Introduction}
\label{sec:intro}

This paper is inspired by the pioneering work on the connections between chordal graphs and triangular sets \cite{C2017c}, where the authors introduced the concept of chordal networks and proposed an algorithm for constructing chordal networks for polynomial sets based on the computation of triangular decomposition. In particular, the authors found that for input polynomial sets with chordal associated graphs, the elimination methods due to Wang become more efficient. It worths mentioning that the authors also studied the connections between chordal graphs and \grobner bases \cite{C2016e}, and found that the chordal structures of polynomial sets are destroyed in the computation of \grobner bases. 

The chordal graph structures have been studied and applied in the prediction of structures of matrices appearing in solving (especially sparse) linear systems. It is shown that the sparsity of the matrices handled by Gaussian elimination can be controlled if the associated graph of this input matrix is chordal \cite{P1961t,R1970t,G1994p}. 

Like the \grobner basis which has been greatly developed in its theory, methods, implementations, and applications \cite{B1965A,F1999A,f03a,CLO1998U}, the triangular set is also a powerful algebraic tool in the study on and computation of polynomials symbolically, especially for elimination theory and polynomial system solving \cite{w86z,GC1992s,k93g,w93e,a99t,W2001E,LI2010D,CM2012a}, with diverse applications \cite{Wu94m,c08c}. The process of decomposing a polynomial set into finitely many triangular sets or systems (probably with additional properties like being regular or normal, etc.) with associated zero and ideal relationships is called triangular decomposition of the input polynomial set. Triangular decomposition of polynomial sets can be regarded as multivariate generalization of Gaussian elimination for solving linear equations. 

The top-down strategy in triangular decomposition means that the variables appearing in the input polynomial set are handled in a strictly deceasing order, and it is a widely-used strategy in the design and implementations of algorithms for triangular decomposition. In particular, most algorithms for triangular decomposition due to Wang are in the top-down style \cite{w93e,w98d,w00c}. A Boolean algorithm for triangular decomposition in top-down style with refinement in the Boolean settings has also been proposed \cite{g12c}. The fact that elimination in it is performed in a strictly decreasing order makes triangular decomposition in top-down style the closest among all kinds of triangular decomposition to Gaussion elimination, in which the elimination of entries in different columns of the matrix is also performed in a strict order. 

The study in this paper arises naturally after summing up what are stated above: we study the chordal structures of polynomial sets appearing in the algorithms for triangular decomposition in top-down style, in particular the graph structures of the triangular sets computed by such algorithms. This is multivariate generalization of the study on the roles chordal structures play in Gaussian elimination, and it is highly non-trivial because in this polynomial (and thus nonlinear) case splitting occurs in the triangular decomposition which results in a complicated decomposition process. 

With the introduction of associated graphs of polynomial sets in Section~\ref{sec:pre}, we define a polynomial set to be chordal if its associated graph is chordal. A chordal graph implies a perfect elimination ordering of the vertexes, and we assume that the polynomial set $\pset{F}$ we want to decompose is chordal with the variables ordered as one perfect elimination ordering of its chordal associated graph $G(\pset{F})$. Under such assumptions, in this paper the graph structures of polynomial sets after reduction of one variable and all the variables in triangular decomposition in top-down style are exploited in Section~\ref{sec:perserve}. In particular, it is proved that the associated graph of one specific triangular set computed in an arbitrary algorithm for triangular decomposition in top-down style will be a subgraph of $G(\pset{F})$. Then in Section~\ref{sec:wang} for a specific simply-structured algorithm for triangular decomposition in top-down style, namely Wang's method, after reformulation of the underlying structures of its decomposition tree, we prove that each polynomial set occurring in the decomposition process has an associated graph being subgraph of $G(\pset{F})$, which directly implies that all the triangular sets computed by Wang's method have associated graphs which are subgraphs of $G(\pset{F})$. This paper ends with brief discussions on the applications of graph structures of polynomial sets on expressing the variable sparsity of polynomial sets and potential refined complexity analyses on triangular decomposition in top-down style in Section~\ref{sec:discussion}.

\section{Preliminaries}
\label{sec:pre}

Let $\fk$ be a field, and $\fk[x_1, \ldots, x_n]$ be the multivariate polynomial ring over $\fk$ in the variables $x_1, \ldots, x_n$. For the sake of simplicity, we write $(x_1, \ldots, x_n)$ as $\p{x}$ and $\fk[x_1, \ldots, x_n]$ as $\kx$. 

\subsection{Polynomial sets and associated graphs}
\label{sec:graph}

For a polynomial $F\in \kx$, define the (variable) \emph{support} of $F$, denoted by $\supp(F)$, to be the set of variables in $x_1, \ldots, x_n$ which effectively appear in $F$. For a polynomial set $\pset{F}\subset \kx$, $\supp(\pset{F}) := \cup_{F\in \pset{F}}\supp(F)$. 

The \emph{associated graph} $G(\pset{F})$ of a polynomial set $\pset{F} \subset \kx$ is an undirected graph constructed in the following way: 
\begin{itemize}
\item[(a)] The vertexes of $G(\pset{F})$ are the variables in $\supp(\pset{F})$. 
\item[(b)] There exists an edge connecting two vertexes $x_i$ and $x_j$ in $G(\pset{F})$ for $1\leq i\neq j \leq n$ if there exists one polynomial $F\in \pset{F}$ such that $x_i, x_j \in \supp(F)$. 
\end{itemize}

\begin{example}\label{ex:associated}
  The associated graphs of 
  \begin{equation*}
    \begin{split}
      \pset{P} &= \{x_2+x_1, x_3+x_1, x_4^2+x_2, x_4^3+x_3, x_5+x_2, x_5+x_3+x_2 \}\\
\pset{Q} &= \{x_2+x_1, x_3+x_1, x_3, x_4^2+x_2, x_4^3+x_3, x_5+x_2\}
    \end{split}
  \end{equation*}
are shown in Figure~\ref{fig:associated}.
    \begin{figure}[ht]
      \centering
\includegraphics[width=3cm,keepaspectratio]{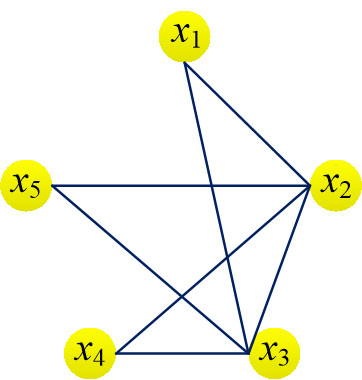}~\qquad
\includegraphics[width=3cm,keepaspectratio]{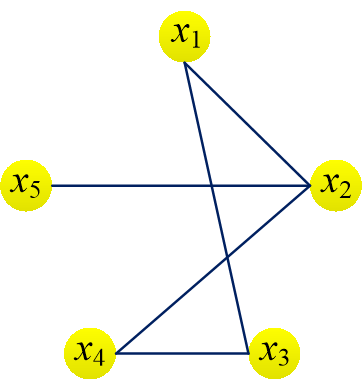}      
      \caption{The associated graphs $G(\pset{P})$ (left) and $G(\pset{Q})$ (right) in Example~\ref{ex:associated}}
      \label{fig:associated}
    \end{figure}
\end{example}




\begin{definition}\label{def:poe}
Let $G = (V, E)$ be a graph with $V = \{x_1, \ldots, x_n\}$. Then an ordering $x_{i_1} < x_{i_2} < \cdots < x_{i_n}$ of the vertexes is called a \emph{perfect elimination ordering} of $G$ if for each $j=i_1, \ldots, i_n$, the restriction of $G$ on the following set
\begin{equation}\label{eq:smaller}
  X_j = \{x_j\} \cup \{x_k: x_k < x_j \mbox{ and } (x_k, x_j) \in E\}
\end{equation}
is a clique. A graph $G$ is said to be \emph{chordal} if there exists a perfect elimination ordering of it. 
\end{definition}

An equivalent condition for a graph $G = (V, E)$ to be \emph{chordal} is the following: for any cycle $C$ contained in $G$ of four or more vertexes, there is an edge $e\in E \setminus C$ connecting two vertexes in $C$. A chordal graph is also called a triangulated one. 

\begin{definition}
  A polynomial set $\pset{F} \subset \kx$ is said to be \emph{chordal} if its associated graph $G(\pset{F})$ is chordal. 
\end{definition}

\begin{example}
In Example~\ref{ex:associated} and Figure~\ref{fig:associated}, the associated graph $G(\pset{P})$ is chordal by definition and thus $\pset{P}$ is chordal, while $G(\pset{Q})$ is not. 
\end{example}

\subsection{Triangular sets and triangular decomposition}
\label{sec:ts}
Throughout this subsection the variables are ordered as $x_1 < \cdots < x_n$. For an arbitrary polynomial $F\in \kx$, denote the greatest variable appearing in $F$ by $\lv(F)$. Let $\lv(F) = x_k$. Write $F = Ix_k^d + R$ with $I\in \fk[x_1, \ldots, x_{k-1}]$, $R\in \fk[x_1, \ldots, x_k]$, and $\deg(R, x_k) < d$. Then the polynomial $I$ and $R$ are called the \emph{initial} and \emph{tail} of $F$ and denoted by $\ini(F)$ and $\tail(F)$ respectively. For two polynomial sets $\pset{F}, \pset{G}\subset \kx$, the set of common zeros of $\pset{F}$ is denoted by $\zero(\pset{F})$, and $\zero(\pset{F} / \pset{G}) := \zero(\pset{F}) \setminus \zero(\prod_{G\in \pset{G}}G)$. 

\begin{definition}\label{def:Tset-system}
  An ordered set of non-constant polynomials $\pset{T} = [T_1, \ldots, T_r] \subset \kx$ is called a \emph{triangular set} if $\lv(T_1) < \cdots < \lv(T_r)$. A tuple $(\pset{T}, \pset{U})$ with $\pset{T}, \pset{U} \subset \kx$ is called a \emph{triangular system} if $\pset{T}$ is a triangular set. 
\end{definition}

\begin{definition}\label{def:DintoSet}
 Let $\pset{F}\subset \kx$ be a polynomial set. Then a finite number of triangular sets $\pset{T}_1, \ldots, \pset{T}_r \subset \kx$ are called a \emph{decomposition} of $\pset{F}$ \emph{into triangular sets} if the zero relationship $\zero(\pset{F}) = \cup_{i=1}^r \zero(\pset{T}_i / \ini(\pset{T}_i))$ holds, where $\ini(\pset{T}_i) := \{\ini(T): T \in \pset{T}_i\}$.
\end{definition}

\begin{definition}\label{def:DintoSys}
  Let $\pset{F}\subset \kx$ be a polynomial set. Then a finite number of triangular systems $(\pset{T}_1,\pset{U}_1), \ldots, (\pset{T}_r, \pset{U}_r)$ are called a \emph{decomposition} of $\pset{F}$ \emph{into triangular systems} if the zero relationship $\zero(\pset{F}) = \cup_{i=1}^r \zero(\pset{T}_i / \pset{U}_i)$ holds. 
\end{definition}

As shown in Definitions~\ref{def:Tset-system}, \ref{def:DintoSet}, and \ref{def:DintoSys}, triangular systems are generalization of triangular sets. For a triangular system $(\pset{T}, \pset{U})$, $\pset{T}$ is a triangular set which represents the equations $\pset{T}=0$ while $\pset{U}$ is a polynomial set which represents the inequations $\pset{U}\neq 0$. 

In general, the process of computing a decomposition of a polynomial set into triangular sets or triangular systems is called \emph{triangular decomposition}. There exist many algorithms for decomposing polynomial sets into triangular sets or systems with different properties. One of the main strategies for designing such algorithms for triangular decomposition is to carry out reduction on polynomials containing the largest (unprocessed) variable until there is only one such polynomial with producing new polynomials whose leading variables are strictly smaller than the currently processed variable. 


For any polynomial set $\pset{P} \subset \kx$ and an integer $i~(1\leq i \leq n)$, we denote $\pset{P}^{(i)} := \{P\in \pset{P}:\, \lv(P) = x_i\}$. The smallest integer $i~(1\leq i \leq n)$ such that $\#(\pset{P}^{(j)}) = 0$ or $1$ for each $j=i+1, \ldots, n$ is called the \emph{level} of $\pset{P}$ and denoted by $\level(\pset{P})$. For two polynomial sets $\pset{P}$ and $\pset{Q}$ in $\kx$, $\pset{P}$ is said to be of \emph{lower rank} than $\pset{Q}$ if either $\level(\pset{P}) < \level(\pset{Q})$ or $\level(\pset{P}) = \level(\pset{Q})$ but the minimal degree in $x_k$ of polynomials in $\pset{P}^{(k)}$ is strictly smaller than that in $\pset{Q}^{(k)}$, where $k=\level(\pset{P}) = \level(\pset{Q})$.

Let $\pset{F}$ be a polynomial set in $\kx$ and $\Phi$ be a set of polynomial sets, initialized with $\{\pset{F}\}$. Then an algorithm $\pset{A}$ for computing triangular decomposition of $\pset{F}$ is said to be in \emph{top-down style} if for each polynomial set $\pset{P} \in \Phi$ with $\level(\pset{P}) = k$, the algorithm $\pset{A}$ computes one polynomial set $\pset{P}'$ such that $\#(\pset{P}'^{(k)})=1$ and $\pset{P}'^{(i)} = \pset{P}^{(i)}$ for $i=k+1, \ldots, n$ and finitely many $\pset{Q}_1, \ldots, \pset{Q}_s$ which are all of lower ranks than $\pset{P}$ and are put into $\Phi$ for later computation with $\pset{A}$ such that $\zero(\pset{P}) = \zero(\pset{P}') \cup \zero(\pset{Q}_1) \cup \cdots \cup \zero(\pset{Q}_s)$.

\begin{remarks}
As mentioned in the introduction, algorithms for triangular decomposition in top-down style are polynomial generalization of Gaussian elimination for transforming a non-singular matrix into echelon form. The requirements on $Q_1, \ldots, Q_s$ above to have lower ranks than $\pset{P}$ guarantee termination of the algorithm $\pset{A}$.
\end{remarks}






\section{Chordality of polynomial sets in general triangular decomposition in top-down style}
\label{sec:perserve}

In this section, the graph structures of polynomial sets in an arbitrary algorithm for triangular decomposition in top-down style are studied when the input polynomial set is chordal. We start this section with the connections between the associated graphs of a triangular set reduced from a chordal polynomial set and the chordal associated graph of the input polynomial set. 

\begin{proposition}\label{prop:woReduction}
Let $\pset{P} \subset \kx$ be a chordal polynomial set with $x_1 < \cdots < x_n$ as one perfect elimination ordering, and for $i=1, \ldots, n$, let $T_i\in \kx$ be a polynomial such that $\lv(T_i) = x_i$ and $\supp(T_i) \subset \supp(\pset{P}^{(i)})$ ($T_i$ is set null if $\pset{P}^{(i)} = \emptyset$). Then $\pset{T} = [T_1, \ldots, T_n]$ is a triangular set, and $G(\pset{T}) \subset G(\pset{P})$. In particular, if $\supp(T_i) = \supp(\pset{P}^{(i)})$ for $i=1, \ldots, n$, then $G(\pset{T}) = G(\pset{P})$.
\end{proposition}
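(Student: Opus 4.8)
The plan is to handle the three assertions in turn, with the bulk of the work residing in the inclusion $G(\pset{T}) \subset G(\pset{P})$, where chordality enters in an essential way.

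For the first assertion I would simply observe that each non-null $T_i$ satisfies $\lv(T_i) = x_i$, so that $x_i \in \supp(T_i)$ forces $T_i$ to be non-constant, and that the leading variables of the non-null entries form a strictly increasing subsequence of $x_1 < \cdots < x_n$. By Definition~\ref{def:Tset-system} this already makes $\pset{T}$ (after discarding the null entries) a triangular set.

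The core is the second assertion. First I would record two elementary facts about $\pset{P}^{(i)}$: since every polynomial in $\pset{P}^{(i)}$ has leading variable $x_i$, all of its variables are $\leq x_i$, and $x_i$ itself lies in $\supp(\pset{P}^{(i)})$ whenever $\pset{P}^{(i)} \neq \emptyset$. The key step is then to prove the containment $\supp(\pset{P}^{(i)}) \subseteq X_i$, where $X_i$ is the set from~\eqref{eq:smaller} attached to the given perfect elimination ordering. Indeed, for any $x_k \in \supp(\pset{P}^{(i)})$ with $x_k \neq x_i$ there is a polynomial $P \in \pset{P}^{(i)}$ containing $x_k$; since $\lv(P) = x_i$ we have $x_i \in \supp(P)$ as well, so $x_k$ and $x_i$ occur together in $P$ and hence $(x_k, x_i) \in E(G(\pset{P}))$, placing $x_k$ in $X_i$. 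Now comes the crucial use of chordality: by Definition~\ref{def:poe} the restriction of $G(\pset{P})$ to $X_i$ is a clique, so any two distinct variables lying in $\supp(\pset{P}^{(i)}) \subseteq X_i$ are joined by an edge of $G(\pset{P})$. Since $\supp(T_i) \subseteq \supp(\pset{P}^{(i)})$, every edge of $G(\pset{T})$ contributed by $T_i$ is thus an edge of $G(\pset{P})$, and likewise every vertex of $G(\pset{T})$ lies in $\supp(\pset{P})$; ranging over $i$ gives $G(\pset{T}) \subset G(\pset{P})$.

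I expect the main obstacle to be precisely this second step, because the hypothesis $\supp(T_i) \subset \supp(\pset{P}^{(i)})$ only tells us that two variables $x_a, x_b$ both belong to $\supp(\pset{P}^{(i)})$ --- a priori they could appear in two \emph{different} polynomials of $\pset{P}^{(i)}$, which would not by itself produce an edge of $G(\pset{P})$. Chordality, via the clique property of $X_i$, is exactly what bridges this gap and converts ``co-occurring in the same level set'' into ``joined by an existing edge of $G(\pset{P})$''. Finally, for the third assertion I would establish the reverse inclusion: given an edge $(x_a, x_b) \in E(G(\pset{P}))$ realized by some $P \in \pset{P}$ with $\lv(P) = x_i$, both variables lie in $\supp(\pset{P}^{(i)}) = \supp(T_i)$ by hypothesis, so $(x_a, x_b) \in E(G(\pset{T}))$; combined with $\supp(\pset{P}) = \bigcup_i \supp(\pset{P}^{(i)}) = \bigcup_i \supp(T_i) = \supp(\pset{T})$ this yields $G(\pset{P}) \subset G(\pset{T})$, and hence equality.
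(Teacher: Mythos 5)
Your proposal is correct and follows essentially the same route as the paper: both reduce every edge of $G(\pset{T})$ to co-occurrence of two variables in $\supp(\pset{P}^{(k)})$, observe that each such variable is adjacent to $x_k$ in $G(\pset{P})$ because $\lv(P)=x_k$ for every $P\in\pset{P}^{(k)}$, and then invoke the clique property of the set $X_k$ from Definition~\ref{def:poe} to produce the edge $(x_a,x_b)$, with the reverse inclusion for the equality case argued identically. Your explicit containment $\supp(\pset{P}^{(k)})\subseteq X_k$ is just a cleaner packaging of what the paper does implicitly (and it uniformly covers the edge case where one endpoint is $x_k$ itself, which the paper glosses over), but it is the same argument.
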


\begin{proof}
It is straightforward that $\pset{T}$ is a triangular set with $\lv(T_i) = x_i$ if $\pset{P}^{(i)}\neq \emptyset$ for $i=1, \ldots, n$. 

For any edge $(x_i,x_j) \in G(\pset{T})$, there exists an integer $k~(i, j \leq k \leq n)$ such that $x_i, x_j \in \supp(T_k)$. Then $x_i, x_j \in \supp(\pset{P}^{(k)})$, and thus $(x_i, x_k) \in G(\pset{P})$ and $(x_j, x_k) \in G(\pset{P})$. Since $G(\pset{P})$ is chordal with $x_1 < \ldots < x_n$ as a perfect elimination ordering and $x_i < x_k$, $x_j < x_k$, we know that $(x_i, x_j) \in G(\pset{P})$ by Definition~\ref{def:poe}. This proves the inclusion $G(\pset{T}) \subset G(\pset{P})$.

In particular, when $\supp(T_i) = \supp(\pset{P}^{(i)})$ for $i=1, \ldots, n$, next we show the inclusion $G(\pset{T}) \supset G(\pset{P})$, which implies the equality $G(\pset{T}) = G(\pset{P})$. For any $(x_i, x_j) \in G(\pset{P})$, there exists an integer $k$ and a polynomial $P$ such that $x_i, x_j \in \supp(P)$ with $P \in \pset{P}^{(k)}$. Since $\supp(P) \subseteq \supp(\pset{T}_k)$, $x_i, x_j \in \supp(T_k)$, and thus $(x_i, x_j) \in G(\pset{T})$. 
\end{proof}

\begin{example}\label{ex:ts}
  Proposition~\ref{prop:woReduction} does not necessarily hold in general if the polynomial set $\pset{P}$ is not chordal. Consider the same $\pset{Q}$ as in Example~\ref{ex:associated} whose associated graph $G(\pset{Q})$ is not chordal. Let 
$$\pset{T} = [x_2+x_1, x_3+x_1, -x_2x_4+x_3, x_5+x_2].$$
Then one can check that for $i = 2, \ldots, 5$, $\supp(\pset{T}^{(i)}) = \supp(\pset{Q}^{(i)})$, but the associated graph $G(\pset{T})$, as shown in Figure~\ref{fig:ts}, is not a subgraph of $G(\pset{Q})$. 
    \begin{figure}[ht]
      \centering
\includegraphics[width=3cm,keepaspectratio]{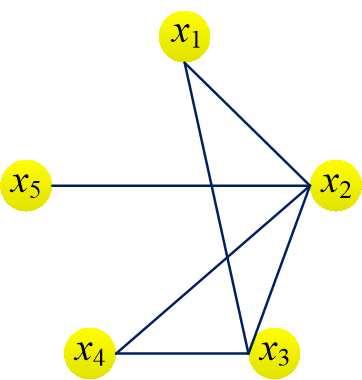}      
      \caption{The associated graph $G(\pset{T})$ in Example~\ref{ex:ts}}
\label{fig:ts}
    \end{figure}
\end{example}

Proposition~\ref{prop:woReduction} above relates the associated graph of a triangular set and that of a chordal polynomial set when their variables satisfy certain conditions. The following theorem is for relating the associated graph of a chordal polynomial set before and after one kind of commonly used reduction in triangular decomposition.

\begin{theorem}\label{thm:reduction}
Let $\pset{P} \subset \kx$ be a chordal polynomial set such that $\pset{P}^{(n)} \neq \emptyset$ and $x_1 < \ldots < x_n$ is one perfect elimination ordering. Let $T \in \kx$ be a polynomial such that $\lv(T) = x_n$ and $\supp(T) \subset \supp(\pset{P}^{(n)})$, and $\pset{R} \subset \kx$ be a polynomial set such that $\supp(\pset{R}) \subset \supp(\pset{P}^{(n)}) \setminus \{x_n\}$. Then for the polynomial set $\tilde{\pset{P}} = \{\tilde{\pset{P}}^{(1)} , \ldots, \tilde{\pset{P}}^{(n-1)}, T\}$, where $\tilde{\pset{P}}^{(k)} = \pset{P}^{(k)} \cup \pset{R}^{(k)}$ for $k=1, \ldots, n-1$, we have $G(\tilde{\pset{P}}) \subset G(\pset{P})$. In particular, if $\supp(T) = \supp(\pset{P}^{(n)})$, then $G(\tilde{\pset{P}}) = G(\pset{P})$.
\end{theorem}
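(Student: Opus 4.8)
The plan is to mirror the proof of Proposition~\ref{prop:woReduction}: I would classify every edge of $G(\tilde{\pset{P}})$ according to which polynomial of $\tilde{\pset{P}}$ witnesses it, and in each case trace the two endpoints back to an edge of $G(\pset{P})$, invoking chordality only when the witnessing polynomial is one of the ``new'' ones. As a preliminary I would record that the vertex set is already controlled: since $\supp(\pset{P}^{(k)}) \subset \supp(\pset{P})$ for $k \leq n-1$, $\supp(\pset{R}) \subset \supp(\pset{P}^{(n)}) \setminus \{x_n\}$, and $\supp(T) \subset \supp(\pset{P}^{(n)})$, we obtain $\supp(\tilde{\pset{P}}) \subset \supp(\pset{P})$, so the vertices of $G(\tilde{\pset{P}})$ lie among those of $G(\pset{P})$.

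For the edges, take an arbitrary $(x_i, x_j) \in G(\tilde{\pset{P}})$ with a witnessing polynomial $P \in \tilde{\pset{P}}$, so $x_i, x_j \in \supp(P)$. There are three cases according to the description of $\tilde{\pset{P}}$ as $\bigcup_{k=1}^{n-1}(\pset{P}^{(k)} \cup \pset{R}^{(k)}) \cup \{T\}$. If $P \in \pset{P}^{(k)}$ for some $k \leq n-1$, then $P \in \pset{P}$ and the edge lies in $G(\pset{P})$ immediately. If instead $P$ is one of the new polynomials, i.e. $P \in \pset{R}^{(k)}$ or $P = T$, then by the support hypotheses $\supp(P) \subset \supp(\pset{P}^{(n)})$, so both $x_i$ and $x_j$ belong to $\supp(\pset{P}^{(n)})$.

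The heart of the argument is this last case, which is exactly the chordality step of Proposition~\ref{prop:woReduction} applied at the vertex $x_n$. Since $x_i, x_j \in \supp(\pset{P}^{(n)})$, each appears in some polynomial of $\pset{P}$ whose leading variable is $x_n$; hence $(x_i, x_n)$ and $(x_j, x_n)$ are edges of $G(\pset{P})$ (unless an endpoint already equals $x_n$, in which case the edge is trivially present). Because $x_1 < \cdots < x_n$ is a perfect elimination ordering of the chordal graph $G(\pset{P})$ and $x_i, x_j < x_n$, both $x_i$ and $x_j$ lie in the set $X_n$ of Definition~\ref{def:poe}, whose restriction is a clique; therefore $(x_i, x_j) \in G(\pset{P})$. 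This establishes $G(\tilde{\pset{P}}) \subset G(\pset{P})$.

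For the ``in particular'' clause I would assume $\supp(T) = \supp(\pset{P}^{(n)})$ and prove the reverse inclusion $G(\pset{P}) \subset G(\tilde{\pset{P}})$. Given $(x_i, x_j) \in G(\pset{P})$ witnessed by $P \in \pset{P}^{(k)}$, if $k \leq n-1$ then $P \in \tilde{\pset{P}}^{(k)} \subset \tilde{\pset{P}}$ and the edge persists, while if $k = n$ then $x_i, x_j \in \supp(\pset{P}^{(n)}) = \supp(T)$ and $T \in \tilde{\pset{P}}$ supplies the edge; the matching vertex-set inclusion follows in the same manner. I do not anticipate a genuine obstacle: the only point requiring care is verifying that the new polynomials $\pset{R}$ and $T$ never create an edge outside $\supp(\pset{P}^{(n)})$, which is precisely what the support hypotheses guarantee, after which chordality at $x_n$ closes every such edge inside $G(\pset{P})$.
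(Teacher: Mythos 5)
Your proof is correct and follows essentially the same route as the paper's: classify each edge of $G(\tilde{\pset{P}})$ by its witnessing polynomial, pass edges from old polynomials of $\pset{P}$ through directly, and for edges coming from $T$ or $\pset{R}$ use the support hypotheses to place both endpoints in $\supp(\pset{P}^{(n)})$ and close the edge via the perfect elimination ordering at $x_n$, with the same two-case argument for the reverse inclusion when $\supp(T) = \supp(\pset{P}^{(n)})$. Your explicit remark that an endpoint equal to $x_n$ makes the edge trivially present is a small point of care the paper leaves implicit, but it does not change the argument.
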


\begin{proof}
To prove the inclusion $G(\tilde{\pset{P}}) \subset G(\pset{P})$, it suffices to show that for each edge $(x_i, x_j) \in G(\tilde{\pset{P}})$, we have $(x_i, x_j) \in G(\pset{P})$. For an arbitrary edge $(x_i, x_j) \in G(\tilde{\pset{P}})$, there exists a polynomial $P\in \tilde{\pset{P}}$ and an integer $k~(i, j \leq k \leq n)$ such that $x_i, x_j \in \supp(P)$ and $P\in \tilde{\pset{P}}^{(k)}$. 

If $k=n$, then $x_i, x_j \in \supp(T)$, and by $\supp(T) \subset \supp(\pset{P}^{(n)})$ we have $x_i, x_j \in \supp(\pset{P}^{(n)})$. This implies that $(x_i, x_n), (x_j, x_n) \in G(\pset{P}^{(n)}) \subset G(\pset{P})$ and by the chordality of $G(\pset{P})$ we have $(x_i, x_j) \in G(\pset{P})$.

Else if $k<n$, then by $\tilde{\pset{P}}^{(k)} = \pset{P}^{(k)} \cup \pset{R}^{(k)}$ there are two cases for $P$ accordingly. When $P\in \pset{P}^{(k)} \subset \pset{P}$, clearly $(x_i, x_j) \in G(\pset{P})$; when $P\in \pset{R}^{(k)}$, we have $x_i, x_j \in \supp(\pset{R}^{(k)}) \subset \supp(\pset{P}^{(n)})$, and thus $(x_i, x_n), (x_j, x_n) \in G(\pset{P}^{(n)}) \subset G(\pset{P})$, and the chordality $G(\pset{P})$ implies $(x_i, x_j) \in G(\pset{P})$. 

In particular, if $\supp(T) = \supp(\pset{P}^{(n)})$, then by $G(\pset{P}^{(k)}) \subset G(\tilde{\pset{P}}^{(k)})$ for $k=1, \ldots, n-1$ and $G(\pset{P}^{(n)})\subset G(T)$, we have $G(\pset{P}) \subset G(\tilde{\pset{P}})$. This proves the equality $G(\tilde{\pset{P}}) = G(\pset{P})$.
\end{proof}

\begin{example}\label{ex:oneReduction}
  Let $\pset{P}$ be the chordal polynomial set as in Example~\ref{ex:associated}. Then $\pset{P}^{(5)} = \{x_5+x_2, x_5+x_3+x_2\}$. If we take $T = x_5+x_2$, and $\pset{R} = \{\prem(x_5+x_3+x_2, x_5+x_2)\} = \{x_3\}$, then $\tilde{\pset{P}}$ equals $\pset{Q}$ in Example~\ref{ex:associated}, and $G(\tilde{\pset{P}})$ is a (strict) subgraph of $G(\pset{P})$; If we take $T = x_5+x_3+x_2$, and $\pset{R} = \{\prem(x_5+x_2, x_5+x_3+x_2)\} = \{-x_3\}$, then $\supp(T) = \supp(\pset{P}^{(5)})$ and thus $G(\tilde{\pset{P}}) = G(\pset{P})$.
\end{example}


Next we introduce some notations to formulate the reduction process in Theorem~\ref{thm:reduction}. Denote the power set of a set $S$ by $2^S$. For an integer $i~(1\leq i \leq n)$, let $f_i$ be a mapping
 \begin{equation}\label{eq:mapping}
   \begin{split}
 f_i: 2^{\fk[\p{x}_i] \setminus \fk[\p{x}_{i-1}]} &\rightarrow (\fk[\p{x}_i] \setminus \fk[\p{x}_{i-1}]) \times 2^{\fk[\p{x}_{i-1}]}\\
 \pset{P} &\mapsto (T, \pset{R})     
   \end{split}
 \end{equation}
such that $\supp(T) \subset \supp(\pset{P}) $ and $\supp(\pset{R}) \subset \supp(\pset{P})$, where $\fk[\p{x}_0]$ is understood as $\fk$. For a polynomial set $\pset{P} \subset \kx$ and a fixed integer $i~(1\leq i \leq n)$, suppose that $(T_i, \pset{R}_i) = f_i(\pset{P}^{(i)})$ for some $f_i$ as stated above. For $j=1, \ldots, n$, define the polynomial set
\begin{equation*}
\red_i(\pset{P}^{(j)}) := \left\{
    \begin{tabular}[l]{ll}
      $\pset{P}^{(j)}$, & if $j > i$  \\
      $\{T_i\}$, & if $j=i$ \\
      $\pset{P}^{(j)} \cup \pset{R}_i^{(j)}$, & if $j<i$
    \end{tabular}
\right.
\end{equation*}
and $\red_i(\pset{P}) := \cup_{j=1}^n \red_i(\pset{P}^{(j)})$. In particular, write 
\begin{equation}
  \label{eq:redBar}
  \overline{\red}_i(\pset{P}) := \red_{i}(\red_{i+1}(\cdots (\red_n(\pset{P}))\cdots))
\end{equation}
for simplicity. 

Here $\red_i(\cdot)$ denotes the result of reduction with respect to $x_i$ and $\overline{\red}_i(\cdot)$ denotes the result of successive reduction with respect to $x_n, x_{n-1}, \ldots, x_i$. Following the above terminologies, Theorem~\ref{thm:reduction} can be reformulated as $G(\red_n(\pset{P})) \subset G(\pset{P})$, and the equality holds if $\supp(T_n) = \supp(\pset{P}^{(n)})$. 


\begin{proposition}\label{prop:left}
Let $\pset{P} \subset \kx$ be a chordal polynomial set with $x_1 < \cdots < x_n$ as one perfect elimination ordering. For each $i~(1\leq i \leq n)$, suppose that $(T_i, \pset{R}_i) = f_i(\overline{\red}_{i+1}(\pset{P})^{(i)})$ for some $f_i$ as in \eqref{eq:mapping} and $\supp(T_i) = \supp(\overline{\red}_{i+1}(\pset{P})^{(i)})$, where $\overline{\red}_{n+1}(\pset{P})$ is understood as $\pset{P}$. Then $G(\overline{\red}_{1}(\pset{P})) = G(\pset{P})$. 
\end{proposition}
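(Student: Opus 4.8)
The plan is to prove the equality $G(\overline{\red}_1(\pset{P})) = G(\pset{P})$ by downward induction on the reduction index, peeling off one $\red_i$ at a time and showing that each single reduction step leaves the associated graph \emph{unchanged}. Writing $\pset{Q}_i := \overline{\red}_{i+1}(\pset{P})$ (so that $\pset{Q}_n = \pset{P}$ and $\overline{\red}_i(\pset{P}) = \red_i(\pset{Q}_i)$), I would prove the invariant $G(\pset{Q}_i) = G(\pset{P})$ for $i = n+1, n, \ldots, 1$. The base case $i = n+1$ is the identity $\overline{\red}_{n+1}(\pset{P}) = \pset{P}$. The entire inductive content is then concentrated in a single-step claim: if $\pset{Q}$ is chordal with $x_1 < \cdots < x_n$ a perfect elimination ordering and $(T_i, \pset{R}_i) = f_i(\pset{Q}^{(i)})$ with $\supp(T_i) = \supp(\pset{Q}^{(i)})$, then $G(\red_i(\pset{Q})) = G(\pset{Q})$.

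This single-step claim is a level-$i$ analogue of Theorem~\ref{thm:reduction}, which is exactly the case $i = n$ (the reformulation $G(\red_n(\pset{P})) \subset G(\pset{P})$ with equality under $\supp(T_n) = \supp(\pset{P}^{(n)})$). The observation that lets me reuse the argument is that $\red_i$ is \emph{local} to the levels $\le i$: by definition $\red_i(\pset{Q})^{(k)} = \pset{Q}^{(k)}$ for every $k > i$, so the polynomials whose leading variable exceeds $x_i$ are carried along untouched and contribute identical edges on both sides. For the forward inclusion $G(\red_i(\pset{Q})) \subset G(\pset{Q})$ I would take an arbitrary edge $(x_a, x_b)$ arising from some $P \in \red_i(\pset{Q})^{(k)}$ and split into three cases: $k > i$ is immediate since $P \in \pset{Q}^{(k)}$; $k = i$ forces $P = T_i$, and from $\supp(T_i) \subset \supp(\pset{Q}^{(i)})$ one recovers $(x_a, x_b) \in G(\pset{Q})$, trivially when $x_i \in \{x_a, x_b\}$ and otherwise through the chordality of $G(\pset{Q})$ applied via Definition~\ref{def:poe} to $x_a, x_b < x_i$ that are both adjacent to $x_i$; the case $k < i$ with $P \in \pset{R}_i^{(k)}$ uses $\supp(\pset{R}_i) \subset \supp(\pset{Q}^{(i)}) \setminus \{x_i\}$ and the same chordality argument, while $P \in \pset{Q}^{(k)}$ is trivial. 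This reproduces the proof of Theorem~\ref{thm:reduction} with $n$ replaced by $i$, augmented only by the new but vacuous case $k > i$.

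For the reverse inclusion under $\supp(T_i) = \supp(\pset{Q}^{(i)})$, I would decompose $G(\pset{Q}) = \bigcup_{k} G(\pset{Q}^{(k)})$ and verify level by level that each summand lies in $G(\red_i(\pset{Q}))$: for $k > i$ the level is unchanged, for $k < i$ one has $\pset{Q}^{(k)} \subseteq \red_i(\pset{Q})^{(k)}$, and for $k = i$ the equality of supports gives $G(\pset{Q}^{(i)}) \subset G(T_i)$ since $G(T_i)$ is the clique on $\supp(T_i)$. Combining the two inclusions proves the single-step claim. To close the induction I only need the chordality hypothesis to persist between steps, and this is automatic: because each step preserves the graph, $G(\pset{Q}_i) = G(\pset{P})$ remains chordal with the same perfect elimination ordering, so the single-step claim applies again at index $i-1$; iterating down to $i = 1$ yields $G(\overline{\red}_1(\pset{P})) = G(\pset{P})$.

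The main obstacle is conceptual rather than computational: recognizing that Theorem~\ref{thm:reduction} is not tied to the top variable $x_n$ but is a statement about reduction at \emph{any} level, and, more importantly, that the induction closes precisely because the associated graph is a genuine invariant of the whole successive-reduction process (equal to $G(\pset{P})$, not merely contained in it), so the chordality needed to run the argument is never lost. The only minor point to dispatch is the degenerate case $\pset{Q}_i^{(i)} = \emptyset$, in which no $T_i$ is produced and $\red_i$ acts as the identity on the graph, making that step trivially graph-preserving.
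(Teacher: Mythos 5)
Your proof is correct and follows essentially the same route as the paper, which simply iterates Theorem~\ref{thm:reduction} and observes (as in the remark following the proposition) that each reduction step preserves the associated graph exactly, so the chordality hypothesis and the perfect elimination ordering persist down to $i=1$. Your only additions are making explicit the level-$i$ analogue of Theorem~\ref{thm:reduction} (whose statement literally concerns only the top variable $x_n$, with higher levels carried along untouched) and the degenerate case $\overline{\red}_{i+1}(\pset{P})^{(i)} = \emptyset$, details the paper's one-line proof leaves implicit.
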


\begin{proof}
Repeated use of Theorem~\ref{thm:reduction} implies
$$G(\pset{P}) = G(\red_n(\pset{P})) = G(\overline{\red}_{n-1}(\pset{P})) = \cdots = G(\overline{\red}_1(\pset{P})),$$
and the conclusion follows. 
\end{proof}

\begin{remarks}
  Note that $\overline{\red}_1(\pset{P})$ forms a triangular set after reordering if $\overline{\red}_1(\pset{P})$ does not contain any non-zero constant. Indeed, the reduction process to compute this triangular set is commonly used in algorithms for triangular decomposition in top-down style, and the mapping $f_i$ in \eqref{eq:mapping} is abstraction of specific reductions used in different kinds of algorithms for triangular decomposition \cite{JLW2013a}. For example, one specific kind of such reduction is performed by using pseudo-divisions, and in this case $\pset{R}$ in \eqref{eq:mapping} consists of pseudo-remainders which do not contain $x_i$. 
\end{remarks}

  Proposition~\ref{prop:left} holds because after every reduction $G(\overline{\red}_i(\pset{P}))$ remains the same as the chordal graph $G(\pset{P})$, and thus the hypotheses of Theorem~\ref{thm:reduction} remain satisfied. If we loosen the condition $\supp(T_i) = \supp(\overline{\red}_{i+1}(\pset{P})^{(i)})$ in Proposition~\ref{prop:left} to $\supp(T_i) \subset \supp(\overline{\red}_{i+1}(\pset{P})^{(i)})$, then in general we will not have
$$G(\overline{\red}_1(\pset{P})) \subset \cdots \subset G(\overline{\red}_{n-1}(\pset{P})) \subset G(\red_n(\pset{P})) \subset G(\pset{P}),$$
as shown by the following example (though the last inclusion always holds because $G(\pset{P})$ is chordal). 

\begin{example}\label{ex:successiveInc}
Let us continue with Example~\ref{ex:oneReduction} with $\pset{P}$ and $\pset{Q} = \red_5(\pset{P})$, where $G(\pset{Q}) \subsetneq G(\pset{P})$. Take 
\begin{equation*}
  \begin{split}
 T_4 &= \prem(x_4^3+x_3, x_4^2+x_2) = -x_2x_4+x_3, \\
\pset{R}_4 &= \{\prem(x_4^2+x_2, -x_2x_4+x_3)\} = \{x_3^2-x_2^3\},
  \end{split}
\end{equation*}
then 
$$\pset{Q}' := \overline{\red}_{4}(\pset{P}) = \{x_2+x_1, x_3+x_1, x_3^2-x_2^3, x_3, -x_2x_4+x_3, x_5+x_2\}.$$
The associated graph $G(\pset{Q}')$ is shown below. Note that $G(\pset{Q}') \not \subset G(\pset{Q})$ but $G(\pset{Q}') \subset G(\pset{P})$. 
    \begin{figure}[ht]
      \centering
\includegraphics[width=3cm,keepaspectratio]{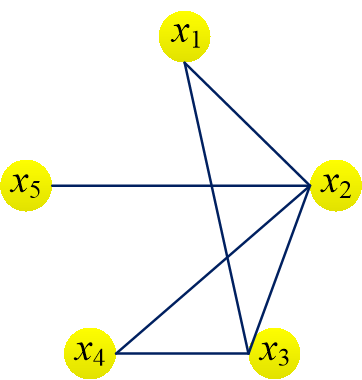} 
      \caption{The associated graph $G(\pset{Q}')$ in Example~\ref{ex:successiveInc}}
      \label{fig:needCompletion}
    \end{figure}
\end{example}

Despite of this example where successive inclusions of the associated graphs in the reduction chain does not hold, it can be proved that for each $i=n, \ldots, 1$, $G(\overline{\red}_i(\pset{P}))$ is a subgraph of the original graph $G(\pset{P})$.

\begin{lemma}\label{prop:subgraph}
  Let $\pset{P} \subset \kx$ be a chordal polynomial set with $x_1 < \cdots < x_n$ as one perfect elimination ordering and $\overline{\red}_i(\pset{P})$ be defined in \eqref{eq:redBar} for $i=n, \ldots, 1$. Then for each $i=n, \ldots, 1$ and any two variables $x_p$ and $x_q$, if there exists an integer $k$ such that $x_p, x_q \in \supp(\overline{\red}_i(\pset{P})^{(k)})$, then $(x_p, x_q) \in G(\pset{P})$. 
\end{lemma}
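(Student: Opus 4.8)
The plan is to prove a statement slightly stronger than the lemma, from which the conclusion falls out immediately: for each $i = n+1, n, \ldots, 1$ and each level $k$, the support $\supp(\overline{\red}_i(\pset{P})^{(k)})$ is contained in the vertex set $X_k = \{x_k\} \cup \{x_l : x_l < x_k \text{ and } (x_l, x_k) \in G(\pset{P})\}$ from Definition~\ref{def:poe} (here $\overline{\red}_{n+1}(\pset{P})$ is read as $\pset{P}$, matching the convention of Proposition~\ref{prop:left}). Because $x_1 < \cdots < x_n$ is a perfect elimination ordering, each $X_k$ induces a clique in $G(\pset{P})$; so once this containment is established, any two distinct variables $x_p, x_q$ lying in a common $\supp(\overline{\red}_i(\pset{P})^{(k)})$ both lie in the clique $X_k$ and are therefore adjacent, which is exactly $(x_p, x_q) \in G(\pset{P})$.

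I would argue by downward induction on $i$. For the base case $i = n+1$, every $P \in \pset{P}^{(k)}$ has $\lv(P) = x_k$, so $x_k \in \supp(P)$ and each remaining variable of $P$ is smaller than $x_k$ and adjacent to it; hence $\supp(\pset{P}^{(k)}) \subset X_k$. For the inductive step I assume $\supp(\overline{\red}_{i+1}(\pset{P})^{(k)}) \subset X_k$ for all $k$ and split according to the three branches of the definition of $\red_i$: for $k > i$ the level is unchanged; for $k = i$ the level becomes $\{T_i\}$, and the defining property of $f_i$ in \eqref{eq:mapping} together with the induction hypothesis gives $\supp(T_i) \subset \supp(\overline{\red}_{i+1}(\pset{P})^{(i)}) \subset X_i$; and for $k < i$ the level is augmented to $\overline{\red}_{i+1}(\pset{P})^{(k)} \cup \pset{R}_i^{(k)}$.

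The case $k < i$ is the crux and the step I expect to be the main obstacle, since it is where new variables can enter a lower level through the remainder part $\pset{R}_i^{(k)}$ and one must verify they stay inside the clique $X_k$. The key is that $\supp(\pset{R}_i) \subset \supp(\overline{\red}_{i+1}(\pset{P})^{(i)}) \subset X_i$ by \eqref{eq:mapping} and the induction hypothesis, while any $x_p \in \supp(\pset{R}_i^{(k)})$ satisfies $x_p \leq x_k$ because the polynomials of $\pset{R}_i^{(k)}$ have leading variable $x_k$. Thus $x_p$ and $x_k$ both lie in the clique $X_i$, so $(x_p, x_k) \in G(\pset{P})$ whenever $x_p \neq x_k$; combined with $x_p < x_k$ this places $x_p \in X_k$ by definition. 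Hence $\supp(\pset{R}_i^{(k)}) \subset X_k$, and with the induction hypothesis $\supp(\overline{\red}_{i+1}(\pset{P})^{(k)}) \subset X_k$ this yields $\supp(\overline{\red}_i(\pset{P})^{(k)}) \subset X_k$, closing the induction. The lemma then follows from the cliqueness of $X_k$ as explained above.
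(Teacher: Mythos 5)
Your proof is correct, and it takes a genuinely different route from the paper's, though both are downward inductions on $i$. The paper's induction hypothesis is the lemma's pairwise statement itself: writing $\overline{\red}_{j-1}(\pset{P}) = \red_{j-1}(\overline{\red}_j(\pset{P}))$, it splits an edge $(x_p, x_q)$ at level $k$ into three cases according to whether $x_p, x_q$ both come from the old level $k$, both from the new remainder set $\tilde{\pset{R}}^{(k)}$, or one from each; only the mixed case needs chordality, via two applications of the induction hypothesis to get $(x_p, x_k), (x_q, x_k) \in G(\pset{P})$ and then the perfect elimination ordering to conclude $(x_p, x_q) \in G(\pset{P})$. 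You instead strengthen the invariant to $\supp(\overline{\red}_i(\pset{P})^{(k)}) \subset X_k$, i.e., every level stays inside the PEO clique of its leading variable, which packages the paper's mixed-case reasoning once and for all: your case $k < i$, where $x_p, x_k \in X_i$ gives $(x_p, x_k) \in G(\pset{P})$ and hence $x_p \in X_k$, is exactly the spot where the paper invokes chordality, but your formulation replaces the three-way split over pairs of variables by a split over the three branches of the definition of $\red_i$, and it delivers a slightly stronger conclusion (it also immediately re-derives the subgraph inclusion of Proposition~\ref{prop:woReduction}). Two small points worth making explicit: in the case $k < i$ you need $x_k \in X_i$, which holds because $x_k$ is the leading variable of the nonempty set $\pset{R}_i^{(k)}$, so $x_k \in \supp(\pset{R}_i) \subset \supp(\overline{\red}_{i+1}(\pset{P})^{(i)}) \subset X_i$ by the induction hypothesis --- you use this implicitly and it is fine; and your final deduction should note that when one of the two variables is $x_k$ itself, adjacency follows directly from the definition of $X_k$ rather than from the clique on $X_k \setminus \{x_k\}$, which your phrasing ``any two distinct vertices of the clique $X_k$'' already covers. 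Your base case at $i = n+1$ (the input set itself) is also cleaner and more self-contained than the paper's, which defers to the proof of Theorem~\ref{thm:reduction}.
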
 

\begin{proof}
  We induce on the integer $i$. In the case $i=n$, from the proof of Theorem~\ref{thm:reduction} one can easily find  that the proposition is true. Now suppose that the proposition holds for $i=j~(< n)$, and next we prove that it also holds for $i=j-1$, namely for any $x_p$ and $x_q$, if there exists $k~(\geq p, q)$ such that $x_p, x_q \in \supp(\overline{\red}_{j-1}(\pset{P})^{(k)})$, then $(x_p, x_q) \in G(\pset{P})$. 

First by $\overline{\red}_{j-1}(\pset{P}) = \red_{j-1}(\overline{\red}_j(\pset{P}))$ we know that there exists a polynomial set $\tilde{\pset{R}}$ such that
$$\overline{\red}_{j-1}(\pset{P})^{(k)} = \overline{\red}_j(\pset{P})^{(k)} \cup \tilde{\pset{R}}^{(k)}$$
and $\supp(\tilde{\pset{R}}) \subset \supp(\overline{\red}_j(\pset{P})^{(j-1)}) \setminus \{x_{j-1}\}$. 

(a) If $\tilde{\pset{R}} = \emptyset$, then $x_p, x_q \in \supp(\overline{\red}_j(\pset{P})^{(k)})$, and by the induction assumption we know that $(x_p, x_q) \in G(\pset{P})$. 

(b) If $\tilde{\pset{R}} \neq \emptyset$, then $x_k \in \supp(\tilde{\pset{R}}^{(k)}) \subset \supp(\overline{\red}_j(\pset{P})^{(j-1)})$. Next we consider the following three cases. 

Case (1): $x_p, x_q \in \supp(\overline{\red}_j(\pset{P})^{(k)})$: with the same argument as in (a) we know that $(x_p, x_q) \in G(\pset{P})$. 

Case (2): $x_p, x_q \in \supp(\tilde{\pset{R}}^{(k)}) \subset \supp(\overline{\red}_j(\pset{P})^{(j-1)})$: by the induction assumption we know that $(x_p, x_q) \in G(\pset{P})$. 

Case (3): $x_p \in \supp(\overline{\red}_j(\pset{P})^{(k)})$ and $x_q \in \supp(\tilde{\pset{R}}^{(k)})$ $\subset \supp(\overline{\red}_j(\pset{P})^{(j-1)})$: Since $x_p, x_k \in \supp(\overline{\red}_j(\pset{P})^{(k)})$, by the induction assumption we have $(x_p, x_k) \in G(\pset{P})$; since $x_q, x_k \in \supp(\overline{\red}_j(\pset{P})^{(j-1)})$, by the induction assumption we have $(x_q, x_k) \in G(\pset{P})$. Then by the chordality of $\pset{P}$, $(x_p, x_k) \in G(\pset{P})$ and $(x_q, x_k) \in G(\pset{P})$ imply that $(x_p, x_q) \in G(\pset{P})$. 

This ends the proof of this proposition with induction on $i$. 
\end{proof}

\begin{theorem}\label{thm:subgraph}
  Let $\pset{P} \!\subset\! \kx$ be a chordal polynomial set with $x_1 < \cdots < x_n$ as one perfect elimination ordering and $\overline{\red}_i(\pset{P})$ be defined in \eqref{eq:redBar} for $i=n, \ldots, 1$. Then for each $i=n, \ldots, 1$, $G(\overline{\red}_i(\pset{P})) \subset G(\pset{P})$.
\end{theorem}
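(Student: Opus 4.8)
The plan is to derive this theorem almost immediately from Lemma~\ref{prop:subgraph}, which already contains all the essential combinatorial content. The only gap to bridge is translating the edge-membership condition in the definition of the associated graph into the precise hypothesis of that lemma. So I would fix an index $i$ and argue that every edge of $G(\overline{\red}_i(\pset{P}))$ is an edge of $G(\pset{P})$; since the lemma is stated for each $i = n, \ldots, 1$, this handles all cases simultaneously.

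First I would take an arbitrary edge $(x_p, x_q) \in G(\overline{\red}_i(\pset{P}))$ and unfold the definition of the associated graph: there exists a polynomial $P \in \overline{\red}_i(\pset{P})$ with $x_p, x_q \in \supp(P)$. Letting $x_k = \lv(P)$, we have $P \in \overline{\red}_i(\pset{P})^{(k)}$, and since every variable in $\supp(P)$ is at most $x_k$, both $x_p$ and $x_q$ lie in $\supp(\overline{\red}_i(\pset{P})^{(k)})$. Applying Lemma~\ref{prop:subgraph} with this $k$ then yields $(x_p, x_q) \in G(\pset{P})$. As $(x_p, x_q)$ was arbitrary, every edge of $G(\overline{\red}_i(\pset{P}))$ is an edge of $G(\pset{P})$.

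To complete the subgraph inclusion I would also record the vertex containment $\supp(\overline{\red}_i(\pset{P})) \subset \supp(\pset{P})$. This follows from a short induction on the number of reduction steps: by the definition of $\red_i$ together with the fact that $f_i$ satisfies $\supp(T_i) \subset \supp(\pset{P}^{(i)})$ and $\supp(\pset{R}_i) \subset \supp(\pset{P}^{(i)})$, each application of $\red_i$ can only delete or shrink supports and never introduces a new variable; hence the support of the successively reduced set remains inside $\supp(\pset{P})$. Together with the edge argument, this gives $G(\overline{\red}_i(\pset{P})) \subset G(\pset{P})$.

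I do not expect a genuine obstacle here, since the real work—the three-case induction handling the interaction of the newly produced polynomials $\tilde{\pset{R}}$ with the retained ones—has already been carried out in Lemma~\ref{prop:subgraph}. The only point that demands care is the bookkeeping around the level decomposition: one must confirm that $\overline{\red}_i(\pset{P})^{(k)}$ genuinely collects exactly those polynomials of $\overline{\red}_i(\pset{P})$ whose leading variable is $x_k$, so that choosing $k = \lv(P)$ legitimately places $P$ in that level. Once this indexing is verified, the theorem is a direct restatement of the lemma in terms of graph edges.
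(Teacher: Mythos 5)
Your proposal is correct and follows essentially the same route as the paper's own proof: both reduce the theorem to Lemma~\ref{prop:subgraph} by taking an arbitrary edge $(x_p,x_q)\in G(\overline{\red}_i(\pset{P}))$, exhibiting a witnessing polynomial $P\in\overline{\red}_i(\pset{P})^{(k)}$ (your choice $k=\lv(P)$ is exactly the paper's $k$ with $p,q\leq k\leq n$), and invoking the lemma, while recording the vertex containment separately. Your explicit induction showing $\supp(\overline{\red}_i(\pset{P}))\subset\supp(\pset{P})$ merely fills in a detail the paper dispatches with the phrase ``by the construction of $\overline{\red}_i(\pset{P})$''.
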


\begin{proof}
  By the construction of $\overline{\red}_i(\pset{P})$, we know that all the vertexes of $G(\overline{\red}_i(\pset{P}))$ are also vertexes of $G(\pset{P})$. For each edge $(x_p, x_q) \in G(\overline{\red}_i(\pset{P}))$, there exists an integer $k~(p, q \leq k \leq n)$ and a polynomial $P$ such that $x_p, x_q \in \supp(P)$ and $P \in \overline{\red}_i(\pset{P})^{(k)}$. Then by Lemma~\ref{prop:subgraph}, we know that $(x_p, x_q) \in G(\pset{P})$, and thus $G(\overline{\red}_i(\pset{P})) \subset G(\pset{P})$. 
\end{proof}

\begin{corollary}\label{cor:subgraph2}
  Let $\pset{P} \subset \kx$ be a chordal polynomial set with $x_1 < \cdots < x_n$ as one perfect elimination ordering and $\overline{\red}_i(\pset{P})$ be defined in \eqref{eq:redBar} for $i=n, \ldots, 1$. If $\pset{T} := \overline{\red}_1(\pset{P})$ does not contain any nonzero constant, then $\pset{T}$ forms a triangular set such that $G(\pset{T}) \subset G(\pset{P})$. 
\end{corollary}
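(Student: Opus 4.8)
The plan is to separate the statement into its two assertions: that $\pset{T} = \overline{\red}_1(\pset{P})$ is a triangular set, and that $G(\pset{T}) \subset G(\pset{P})$. The second assertion requires no new work at all: instantiating Theorem~\ref{thm:subgraph} at $i=1$ gives $G(\overline{\red}_1(\pset{P})) \subset G(\pset{P})$ directly. Hence the entire content of the corollary lies in verifying that $\pset{T}$ has the structure of a triangular set, and this is where I would concentrate.

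For that verification I would track, level by level, how the composite $\overline{\red}_1 = \red_1 \circ \red_2 \circ \cdots \circ \red_n$ (with $\red_n$ applied first) acts. The key structural observation is that a single reduction $\red_m$ touches only the levels $\leq m$: by the codomain of the mapping $f_m$ in \eqref{eq:mapping}, the remainder set obeys $\pset{R}_m \subset \fk[\p{x}_{m-1}]$, so $\pset{R}_m^{(j)} = \emptyset$ for every $j \geq m$. Consequently $\red_m$ collapses level $m$ to the singleton $\{T_m\}$, augments only the strictly lower levels, and leaves every level $j>m$ untouched. From this I would argue by downward induction that in $\overline{\red}_i(\pset{P})$ the levels $i, i+1, \ldots, n$ are already the singletons $\{T_i\}, \{T_{i+1}\}, \ldots, \{T_n\}$ (a level being empty when the corresponding input at that stage is empty): the base case $i=n$ is the definition of $\red_n$, and in the inductive step the reductions $\red_{i-1}, \ldots, \red_1$ applied after $\red_i$ all act on levels $<i$ only, so they can never reopen a level that has already been collapsed.

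Taking $i=1$ then shows that $\overline{\red}_1(\pset{P})$ holds at most one polynomial at each level $k$, namely $T_k$, whose leading variable is exactly $x_k$ by the construction of $f_k$; in particular each such $T_k$ is non-constant. Listing the non-null $T_k$ in order of their (pairwise distinct, strictly increasing) leading variables therefore meets Definition~\ref{def:Tset-system}, the only possible failure being the appearance of a nonzero constant, which is precisely what the hypothesis excludes. I expect the delicate point to be formalizing the ``once collapsed, stays collapsed'' claim cleanly — i.e.\ the level-preservation statement that $\red_m$ leaves all levels above $m$ unchanged — together with the careful bookkeeping that distinguishes genuine leading-variable polynomials $T_k$ from any constant that could surface in a remainder set $\pset{R}_m$; everything else reduces to the routine level-counting above. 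Combining this with the graph inclusion from Theorem~\ref{thm:subgraph} yields the corollary.
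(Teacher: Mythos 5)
Your proposal is correct and follows exactly the route the paper intends: the paper states this corollary without proof, treating it as immediate from Theorem~\ref{thm:subgraph} at $i=1$ (for the inclusion $G(\pset{T}) \subset G(\pset{P})$) together with the observation, already recorded in the remark after Proposition~\ref{prop:left}, that $\overline{\red}_1(\pset{P})$ forms a triangular set after reordering when no nonzero constant appears. Your level-by-level induction merely makes explicit the bookkeeping (each $\red_m$ collapses level $m$ to $\{T_m\}$ and never touches levels above $m$) that the paper leaves implicit, so there is no substantive difference.
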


\begin{remarks}
Corollary~\ref{cor:subgraph2} tells us that under the conditions that the input polynomial set is chordal and the variable ordering is one perfect elimination ordering, the associated graph of one specific triangular set computed in any algorithm for triangular decomposition in top-down style is a subgraph of the associated graph of the input polynomial set. In fact, this triangular set is the ``main branch'' in the triangular decomposition in the sense that other branches are obtained by adding additional constrains in the splitting in the process of triangular decomposition. 

Note that in the case when the input polynomial set $\pset{P}$ is not chordal, a process of chordal completion can be carried out on $G(\pset{P})$ to make it a chordal graph (in the worst case this chordal completion results in a complete graph which is trivially chordal). Then the conditions of Corollary~\ref{cor:subgraph2} will be satisfied after this chordal completion. 
\end{remarks}

The chordality of any triangular sets other than the specific one above in a triangular decomposition computed by an algorithm in top-down style is dependent on the strategy how splitting occurs in the algorithm. Therefore in the next section we focus on Wang's method, one specific algorithm for triangular decomposition in top-down style, and prove that the associated graphs of all the triangular sets computed by Wang's method are subgraphs of the associated graph of a chordal input polynomial set.

\section{Chordality of polynomial sets computed by Wang's method}
\label{sec:wang}

A simply-structured algorithm was proposed by Wang for triangular decomposition in top-down style in 1993 \cite{w93e}, which is referred to as Wang's method in the literatures (see. e.g., \cite{a99ta}). In this section the chordaility of polynomial sets in the decomposition process of Wang's method is studied. 

\subsection{Restatement of Wang's method}
\label{sec:reformulation}

For the self-containness of this paper, Wang's method for triangular decomposition is outlined in Algorithm~\ref{alg:wang} below. In this algorithm, the data structure $(\pset{P}, \pset{Q}, i)$ is used to represent two polynomial sets $\pset{P}$ and $\pset{Q}$ such that $\#(\pset{P}^{(j)}) =1$ or $0$ for $j=i+1, \ldots, n$, and the subroutine $\pop(\Phi)$ returns an element in $\Phi$ and then remove it from $\Phi$. 

\begin{algorithm}[!ht]
\caption{Wang's method for triangular decomposition~~~$\Psi:=\algwang(\pset{F})$}
\label{alg:wang} 

\KwIn{$\pset{F}$, a polynomial set in $\kx$}

\KwOut{$\Psi$, a set of finitely many triangular systems which form a triangular decomposition of $\pset{F}$}

\BlankLine
$\Phi := \{(\pset{F}, \emptyset, n)\}$\;

\While{$\Phi \neq \emptyset$}
{
   $(\pset{P}, \pset{Q}, i) := \pop(\Phi)$\;
   \If{$i=0$}
   {
      $\Psi := \Psi \cup \{(\pset{P}, \pset{Q})\}$\;
      Break\;
   }

   \While{$\#(\pset{P}^{(i)}) > 1$}
   {
      $T:=$ a polynomial in $\pset{P}^{(i)}$ with minimal degree in $x_i$\; \label{line:minimal}
      $\Phi := \Phi \cup \{(\pset{P} \setminus \{T\} \cup \{\ini(T), \tail(T)\}, \pset{Q}, i)\}$\;
      $\overline{\pset{P}} := \pset{P}^{(i)} \setminus \{T\}$\;
      $\pset{P} := \pset{P} \setminus \overline{\pset{P}}$\;
      \For{$P\in \pset{P}^{(i)}$}
      {
         $\pset{P} :=  \pset{P} \cup \{\prem(P, T)\}$\;
      }      
      $\pset{Q} := \pset{Q} \cup \{\ini(T)\}$\;
   }
   $\Phi := \Phi \cup \{(\pset{P}, \pset{Q}, i-1)\}$\;

}

\For{$(\pset{P}, \pset{Q}) \in \Psi$}
{
   \If{$\pset{P}$ contains a non-zero constant}
   {
      $\Psi := \Psi \setminus \{(\pset{P}, \pset{Q})\}$
   }
}

\Return $\Psi$
\end{algorithm}

As shown in Algorithm~\ref{alg:wang}, for each $(\pset{P}, \pset{Q}, i)$ picked from $\Phi$, if $\pset{P}$ is already a triangular set (namely $i=0$, which means $\#(\pset{P}^{(j)})$ is either 0 or 1 for $j=1, \ldots, n$ and $\pset{P}$ contains no non-zero constant), then $(\pset{P}, \pset{Q})$ is included in the output $\Psi$. Note that $\pset{Q}$ in Algorithm~\ref{alg:wang} is for collecting the inequations $\ini(T)\neq 0$ for $T$ in Line~\ref{line:minimal} of Algorithm~\ref{alg:wang} and it does not impose any influence on the graph structures of the triangular sets computed by Wang's method. 

The decomposition process in Wang's method (Algorithm~\ref{alg:wang}) applied to $\pset{F}$ can be viewed as a binary tree with its root as $(\pset{F}, \emptyset, n)$. The nodes of this binary tree are all the tuples $(\pset{P}, \pset{Q}, i)$ picked from $\Phi$, and each node $(\pset{P}, \pset{Q}, i)$ has two children $(\pset{P}', \pset{Q}', i)$ and $(\pset{P}'', \pset{Q}'', i)$ where 
\begin{equation*}
  \begin{split}
    \pset{P}' &:= \pset{P} \setminus \pset{P}^{(i)} \cup \{T\} \cup \{\prem(P, T): P \in \pset{P}\}, \quad     \pset{Q}' := \pset{Q} \cup \{\ini(T)\}, \\
    \pset{P}'' &:= \pset{P} \setminus \{T\} \cup \{\ini(T), \tail(T)\}, \qquad\quad\quad\quad\quad\!    \pset{Q}'' := \pset{Q},
  \end{split}
\end{equation*}
with $T$ as a polynomial in $\pset{P}^{(i)}$ with minimal degree in $x_i$. In fact, the left child node $(\pset{P}', \pset{Q}', i)$ corresponds to the case when $\ini(T)\neq 0$ and thus reduction of $\pset{P}^{(i)}$ are performed with respect to $T$; while the right child node $(\pset{P}'', \pset{Q}'', i)$ corresponds to the case $\ini(T) = 0$ and thus $T$ is replaced by $\ini(T)$ and $\tail(T)$ (since $T=0$ and $\ini(T)=0$ imply $\tail(T)=0$). 

The binary decomposition tree for Wang's method and the splitting at one node are illustrated in Figures~\ref{fig:decTree} and \ref{fig:oneNode} respectively. 

    \begin{figure}[ht]
      \centering
\includegraphics[width=9cm,keepaspectratio]{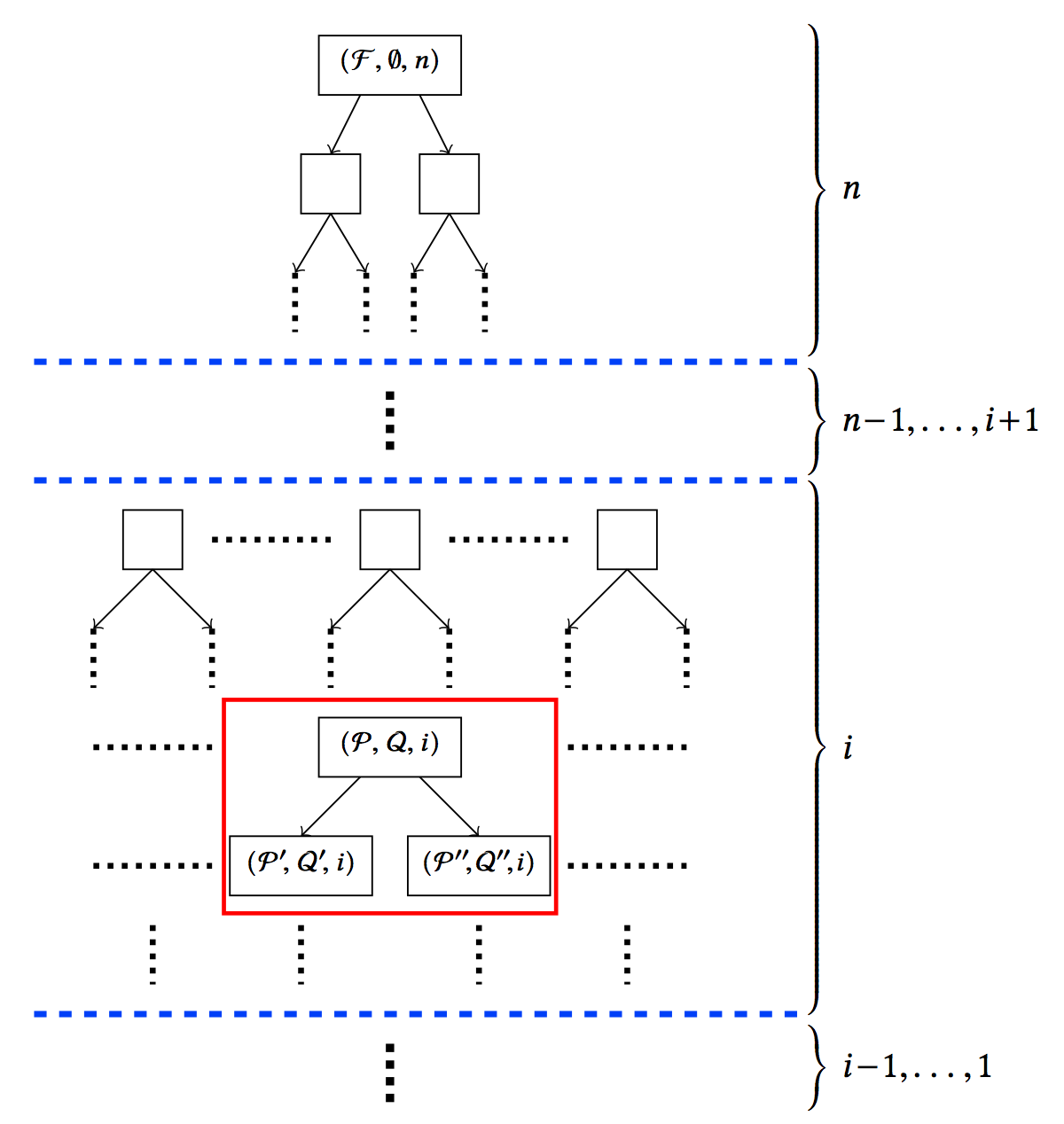}      
      \caption{Binary decomposition tree for Wang's method}
\label{fig:decTree}
    \end{figure}

    \begin{figure}[htp]
      \centering
\includegraphics[width=10cm,keepaspectratio]{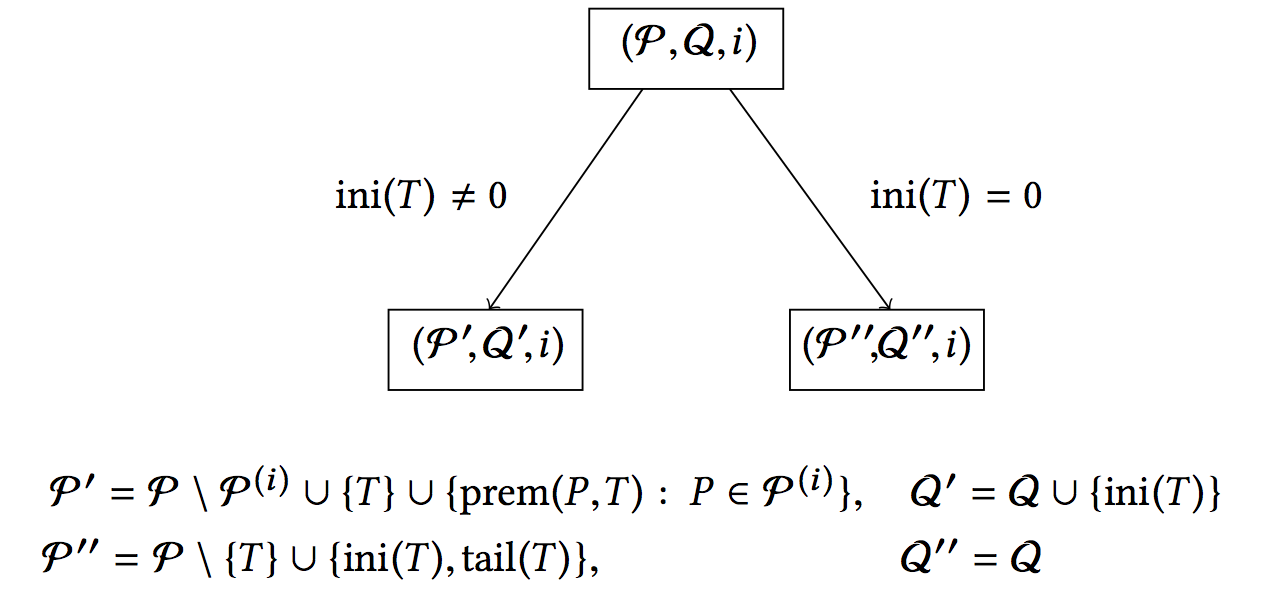}      
      \caption{Splitting at one node $(\pset{P}, \pset{Q}, i)$ in the binary decomposition tree}
\label{fig:oneNode}
    \end{figure}

\subsection{Chordality of polynomial sets in Wang's method}
\label{sec:graphs}

With a chordal polynomial set as input for Wang's method with respect to a perfect elimination ordering, the relationships are first clarified in the following propositions between the associated graphs of the polynomials in the left nodes and the associated graph of the input polynomial set and between the associated graphs of the polynomials in the right child nodes and those of the polynomials in the parent node. 

\begin{proposition}\label{prop:wang-left}
  Let $\pset{F} \subset \kx$ be a chordal polynomial set with $x_1 < \cdots < x_n$ as one perfect elimination ordering and $(\pset{P}, \pset{Q}, i)$ be any node in the binary decomposition tree for Wang's method applied to $\pset{F}$ such that $G(\pset{P}) \subset G(\pset{F})$, $T$ be a polynomial in $\pset{P}$ with minimal degree in $x_i$. Denote $\pset{P}' = \pset{P} \setminus \pset{P}^{(i)} \cup \{T\} \cup \{\prem(P, T):\ P\in \pset{P}^{(i)}\}$. Then $G(\pset{P}') \subset G(\pset{F})$.
\end{proposition}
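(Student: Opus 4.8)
The plan is to follow the pattern of the proof of Theorem~\ref{thm:reduction}, but to route every appeal to chordality through $G(\pset{F})$ rather than through $\pset{P}$, since here $\pset{P}$ is only assumed to satisfy $G(\pset{P}) \subset G(\pset{F})$ and need not itself be chordal. First I would dispose of the vertices: each polynomial contributing to $\pset{P}'$ has support contained in $\supp(\pset{P})$, because $\pset{P} \setminus \pset{P}^{(i)} \cup \{T\} \subseteq \pset{P}$, while each pseudo-remainder satisfies $\supp(\prem(P, T)) \subseteq \supp(P) \cup \supp(T) \subseteq \supp(\pset{P}^{(i)})$ for $P \in \pset{P}^{(i)}$. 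Hence $\supp(\pset{P}') \subseteq \supp(\pset{P}) \subseteq \supp(\pset{F})$, so the vertices of $G(\pset{P}')$ are among those of $G(\pset{F})$.

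It then remains to show that each edge $(x_p, x_q) \in G(\pset{P}')$ lies in $G(\pset{F})$, and I would classify such an edge by a polynomial of $\pset{P}'$ that witnesses it. If that witness lies in $\pset{P} \setminus \pset{P}^{(i)}$ or is $T$, then it already belongs to $\pset{P}$, so $(x_p, x_q) \in G(\pset{P}) \subset G(\pset{F})$ and there is nothing more to do. The only genuinely new edges are those forced by a pseudo-remainder $\prem(P, T)$, in which case $x_p, x_q \in \supp(\prem(P, T)) \subseteq \supp(\pset{P}^{(i)})$; in particular $x_p \leq x_i$ and $x_q \leq x_i$, since every polynomial in $\pset{P}^{(i)}$ has leading variable $x_i$.

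The crux is the observation that every variable in $\supp(\pset{P}^{(i)})$ is adjacent to $x_i$ in $G(\pset{F})$: if $x_r \in \supp(\pset{P}^{(i)})$, then some $P' \in \pset{P}^{(i)}$ contains $x_r$ together with $\lv(P') = x_i$, so $(x_r, x_i) \in G(\pset{P}^{(i)}) \subseteq G(\pset{P}) \subset G(\pset{F})$. Applying this to both $x_p$ and $x_q$ gives $(x_p, x_i), (x_q, x_i) \in G(\pset{F})$. If $x_p = x_i$ or $x_q = x_i$ the edge is already exhibited; otherwise $x_p, x_q < x_i$, and since $x_1 < \cdots < x_n$ is a perfect elimination ordering of the chordal graph $G(\pset{F})$, Definition~\ref{def:poe} applied at the vertex $x_i$ (whose smaller neighbourhood $X_i$ contains both $x_p$ and $x_q$ and is a clique) forces $(x_p, x_q) \in G(\pset{F})$. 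This exhausts every case and yields $G(\pset{P}') \subset G(\pset{F})$.

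I expect the main pitfall, rather than a deep obstacle, to be the temptation to apply Theorem~\ref{thm:reduction} directly to $\pset{P}$: that theorem requires the set being reduced to be chordal, whereas here only $G(\pset{F})$ is chordal. Each chordality step must therefore first lift the relevant edges $(x_p, x_i)$ and $(x_q, x_i)$ into $G(\pset{F})$ via the hypothesis $G(\pset{P}) \subset G(\pset{F})$ before invoking the perfect elimination ordering. A secondary point worth stating carefully is that the pseudo-remainders may still involve $x_i$ when $\deg(T, x_i) > 1$, but this causes no difficulty, since the argument only uses the containment $\supp(\prem(P, T)) \subseteq \supp(\pset{P}^{(i)})$ and never that $x_i$ has been eliminated.
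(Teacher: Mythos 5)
Your proof is correct and follows essentially the same route as the paper's: both reduce to the observation that $\supp(\prem(P,T)) \subseteq \supp(\pset{P}^{(i)})$, lift the edges $(x_p, x_i)$ and $(x_q, x_i)$ into $G(\pset{F})$ via the hypothesis $G(\pset{P}) \subset G(\pset{F})$, and then invoke the perfect elimination ordering of the chordal graph $G(\pset{F})$ at the vertex $x_i$. Your handling of edges witnessed by $T$ (noting $T \in \pset{P}$ directly) and your explicit treatment of the degenerate case $x_p = x_i$ are minor streamlinings of the paper's case analysis, not a different argument.
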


\begin{proof}
  Clearly the vertexes of $G(\pset{P}')$ are also those of $G(\pset{P})$ and thus those of $G(\pset{F})$, and it suffices to prove that for any edge $(x_p, x_q) \in G(\pset{P}')$, we have $(x_p, x_q) \in G(\pset{F})$. 

Denote $\pset{R} := \{\prem(P, T):\, P\in \pset{P}^{(i)}\}$. Then by definition $\supp(\pset{R}^{(l)}) \subset \supp(\pset{P}^{(i)})$ for any $l=1, \ldots, i$. Furthermore, the following relationships hold: $\pset{P}'^{(l)} = \pset{P}^{(l)} \cup \pset{R}^{(l)}$ for $l=1, \ldots, i-1$ and $\pset{P}'^{(i)} = \{T\} \cup \pset{R}^{(i)}$. For any edge $(x_p, x_q) \in G(\pset{P}')$, there exist an integer $k~(p, q \leq k \leq i)$ and a polynomial $P \in \pset{P}'^{(k)}$ such that $x_p, x_q\in \supp(P)$. 

In the case when $k=i$, we have $(x_p, x_q)\in G(\pset{P}'^{(i)})$. Then $x_p, x_q \in \supp(T) \cup \supp(\pset{R}^{(i)}) \subset \supp(\pset{P}^{(i)})$ and thus $(x_p, x_i), (x_q, x_i) \in G(\pset{P}^{(i)}) \subset G(\pset{F})$. By the chordality of $\pset{F}$, we have $(x_p, x_q) \in G(\pset{F})$.

In the case when $k<i$, we have $(x_p, x_q) \in G(\pset{P}'^{(k)})$ with $\pset{P}'^{(k)} = \pset{P}^{(k)} \cup \pset{R}^{(k)}$. If $P \in \pset{P}^{(k)}$, then it is obvious that $(x_p, x_q) \in G(\pset{P}^{(k)}) \subset G(\pset{F})$; otherwise if $P\in \pset{R}^{(k)}$, then $x_p, x_q \in \supp(R^{(k)}) \subset \supp(\pset{P}^{i})$ and thus $(x_p, x_i), (x_q, x_i) \in G(\pset{P}) \subset G(\pset{F})$, then by the chordality of $\pset{F}$, we have $(x_p, x_q) \in G(\pset{F})$.
\end{proof}

\begin{proposition}\label{prop:wang-right}
Let $(\pset{P}, \pset{Q}, i)$ be any node in the binary decomposition tree for Wang's method and $T$ be a polynomial in $\pset{P}^{(i)}$ with minimal degree in $x_i$. Denote $\pset{P}'' = \pset{P} \setminus \{T\} \cup \{\ini(T), \tail(T)\}$. Then $G(\pset{P}'') \subset G(P)$. In particular, if $\supp(\tail(T)) = \supp(T)$, then $G(\pset{P}'') = G(\pset{P})$. 
\end{proposition}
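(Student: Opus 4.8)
The plan is to prove both assertions directly from the definition of the associated graph, using the elementary fact that passing from $T$ to its initial and tail can only shrink supports. A point worth flagging at the outset: unlike Proposition~\ref{prop:wang-left} for the left child, chordality of the input set plays no role here. Replacing $T$ by $\ini(T)$ and $\tail(T)$ deletes one polynomial and inserts two new ones whose supports are already contained in $\supp(T)$, so no new adjacencies can possibly be created, and the inclusion will follow for \emph{any} $\pset{P}$.

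First I would record the one structural fact the proof rests on. Writing $T = \ini(T)\, x_i^d + \tail(T)$ with $\ini(T) \in \fk[x_1, \ldots, x_{i-1}]$ nonzero, $d = \deg(T, x_i) \geq 1$, and $\deg(\tail(T), x_i) < d$, the monomials of $\ini(T)\, x_i^d$ all have $x_i$-degree exactly $d$ whereas those of $\tail(T)$ have $x_i$-degree strictly less than $d$, so the two summands cannot cancel. Hence $\supp(T) = \supp(\ini(T)) \cup \{x_i\} \cup \supp(\tail(T))$; in particular $\supp(\ini(T)) \subset \supp(T)$ and $\supp(\tail(T)) \subset \supp(T)$.

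For the inclusion $G(\pset{P}'') \subset G(\pset{P})$ I would first note that every vertex of $G(\pset{P}'')$ lies in $\supp(\pset{P} \setminus \{T\}) \cup \supp(\ini(T)) \cup \supp(\tail(T)) \subset \supp(\pset{P})$ and so is a vertex of $G(\pset{P})$. Then for an arbitrary edge $(x_p, x_q) \in G(\pset{P}'')$, witnessed by some $P \in \pset{P}''$ with $x_p, x_q \in \supp(P)$, I would split into three cases according to whether $P \in \pset{P} \setminus \{T\}$, $P = \ini(T)$, or $P = \tail(T)$. In the first case the witness already belongs to $\pset{P}$; in the other two cases the containments above give $x_p, x_q \in \supp(T)$, and since $T \in \pset{P}$ the same edge is witnessed in $\pset{P}$. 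In every case $(x_p, x_q) \in G(\pset{P})$.

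For the ``in particular'' claim, assuming $\supp(\tail(T)) = \supp(T)$, I would establish the reverse inclusion $G(\pset{P}) \subset G(\pset{P}'')$: any edge of $G(\pset{P})$ witnessed by a polynomial other than $T$ carries over verbatim since that polynomial survives in $\pset{P}''$, while an edge witnessed by $T$ has both endpoints in $\supp(T) = \supp(\tail(T))$, so $\tail(T) \in \pset{P}''$ witnesses it. Together with the matching vertex sets this gives $G(\pset{P}'') = G(\pset{P})$. The whole argument is routine; the only step needing a moment's care is the no-cancellation observation underpinning the support containments, and there is no genuine obstacle here precisely because chordality is not invoked.
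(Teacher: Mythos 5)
Your proof is correct and takes essentially the same route as the paper's: both rest on $\supp(\ini(T)) \cup \supp(\tail(T)) \subset \supp(T)$, transfer each edge of $G(\pset{P}'')$ to a witness in $\pset{P}$ (with $T$ witnessing edges coming from $\ini(T)$ or $\tail(T)$), and obtain the equality under $\supp(\tail(T)) = \supp(T)$ by letting $\tail(T)$ witness every edge of $G(T)$. Your explicit no-cancellation observation giving $\supp(T) = \supp(\ini(T)) \cup \{x_i\} \cup \supp(\tail(T))$, and your remark that chordality of the input is never used here, are minor sharpenings of points the paper leaves implicit.
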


\begin{proof}
  Since $\pset{P}''$ is constructed by replacing $T$ in $\pset{P}$ with $\ini(T)$ and $\tail(T)$, we only need to study the differences between $G(\pset{P})$ and $G(\pset{P}'')$ caused by this replacement. First, by $\supp(\ini(T)) \cup \supp(\tail(T)) \subset \supp(T)$ we have $\supp(\pset{P}'') \subset \supp(\pset{P})$. Second, for any edge $(x_p, x_q)$ in $G(\ini(T))$ or in $G(\tail(T))$, we have $(x_p, x_q) \in G(T)$, which means that all the edges of $G(\pset{P}'')$ are also edges of $G(\pset{P})$. Therefore, $G(\pset{P}'') \subset G(\pset{P})$.

In particular, if $\supp(\tail(T)) = \supp(T)$, then $\supp(\ini(T)) \cup \supp(\tail(T)) = \supp(T)$ and any edge $(x_p, x_q) \in \supp(T)$ is also contained in $G(\tail(T))$, and thus $G(\pset{P}'') = G(\pset{P})$. 
\end{proof}

\begin{example}\label{ex:wang-right}
  Let 
  \begin{equation*}
    \begin{split}
\pset{P}_1 &= [x_1+x_2, x_1+x_3, x_2+x_3, x_4^3+x_1, x_3x_4^2+x_3+x_4], \\
\pset{P}_2 &= [x_1+x_2, x_1+x_3, x_2+x_3, x_4^3+x_1, x_3x_4^2+x_4].
    \end{split}
  \end{equation*}
Then $G(\pset{P}_1) = G(\pset{P}_2)$ is shown in Figure~\ref{fig:tail} below (left). Let $\pset{P}_1''$ and $\pset{P}_2''$ be constructed from $\pset{P}_1$ and $\pset{P}_2$ with respect to $x_4$ respectively. Then $x_3x_4^2+x_3+x_4$ and $x_3x_4^2+x_4$ are chosen as $T$ respectively and
  \begin{equation*}
    \begin{split}
\pset{P}_1'' &= [x_1+x_2, x_1+x_3, x_2+x_3, x_3, x_4^3+x_1, x_3+x_4], \\
\pset{P}_2'' &= [x_1+x_2, x_1+x_3, x_2+x_3, x_3, x_4^3+x_1, x_4].
    \end{split}
  \end{equation*}
One may check that $G(\pset{P}_1'') = G(\pset{P}_1)$ while $G(\pset{P}_2'') \neq G(\pset{P}_2)$, with $G(\pset{P}_2'')$ shown in Figure~\ref{fig:tail} below (right).

    \begin{figure}[ht]
      \centering
\includegraphics[width=3cm,keepaspectratio]{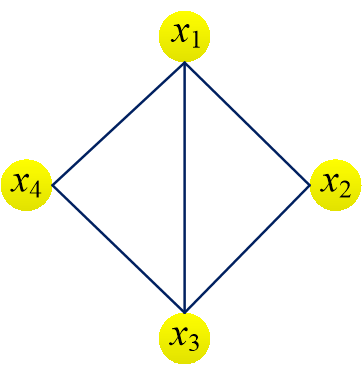}~\qquad 
\includegraphics[width=3cm,keepaspectratio]{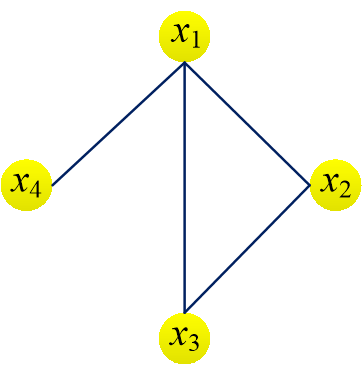}      
      \caption{The associated graphs $G(\pset{P}_1) = G(\pset{P}_2) = G(\pset{P}_1'')$ (left) and $G(\pset{P}_2'')$ (right) in Example~\ref{ex:wang-right}}
      \label{fig:tail}
\end{figure}
\end{example}

Next we prove that with a chordal input polynomial set, all the polynomials in the nodes of the decomposition tree of Wang's method, and thus all the computed triangular sets, have associated graphs which are subgraphs of that of the input polynomial set.

  \begin{theorem}\label{thm:wang}
Let $\pset{F} \subset \kx$ be a chordal polynomial set with $x_1 < \cdots < x_n$ as one perfect elimination ordering and $(\pset{P}, \pset{Q}, i)$ be any node in the binary decomposition tree for Wang's method applied to $\pset{F}$. Then $G(\pset{P}) \subset G(\pset{F})$.
  \end{theorem}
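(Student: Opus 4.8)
The plan is to induct on the depth of nodes in the binary decomposition tree rooted at $(\pset{F}, \emptyset, n)$. The base case is the root itself, where $\pset{P} = \pset{F}$ and $G(\pset{F}) \subset G(\pset{F})$ holds trivially. For the inductive step I would assume that a node $(\pset{P}, \pset{Q}, i)$ satisfies $G(\pset{P}) \subset G(\pset{F})$ and show that both of its children inherit this property; this suffices, since every node other than the root is produced from its parent by exactly one of the two splitting rules described in Section~\ref{sec:reformulation}.

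For the left child $(\pset{P}', \pset{Q}', i)$, with $\pset{P}' = \pset{P} \setminus \pset{P}^{(i)} \cup \{T\} \cup \{\prem(P, T): P\in \pset{P}^{(i)}\}$, the hypotheses of Proposition~\ref{prop:wang-left} are met: $\pset{F}$ is chordal with $x_1 < \cdots < x_n$ one perfect elimination ordering, the parent satisfies $G(\pset{P}) \subset G(\pset{F})$ by the induction hypothesis, and $T$ is a polynomial of minimal degree in $x_i$ among $\pset{P}^{(i)}$. Hence Proposition~\ref{prop:wang-left} yields $G(\pset{P}') \subset G(\pset{F})$ directly.

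For the right child $(\pset{P}'', \pset{Q}'', i)$, with $\pset{P}'' = \pset{P} \setminus \{T\} \cup \{\ini(T), \tail(T)\}$, I would apply Proposition~\ref{prop:wang-right} to obtain $G(\pset{P}'') \subset G(\pset{P})$; combining this with the induction hypothesis $G(\pset{P}) \subset G(\pset{F})$ and the transitivity of the subgraph relation gives $G(\pset{P}'') \subset G(\pset{F})$. This completes the inductive step, and since the decomposition tree is finite (termination being guaranteed by the lower-rank requirement on the nodes), the induction is well-founded.

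The essential content of the argument lives entirely in the two propositions, so I do not expect a genuine obstacle at this stage. The only points requiring care are to confirm that the two splitting rules in the reformulation of Wang's method coincide exactly with the definitions of $\pset{P}'$ and $\pset{P}''$ in Propositions~\ref{prop:wang-left} and~\ref{prop:wang-right}, and to observe the asymmetry between the two branches: the left-child step relies crucially on the chordality of $\pset{F}$ (which enters through Proposition~\ref{prop:wang-left}), whereas the right-child step is purely structural and needs only the comparison with the parent's graph furnished by Proposition~\ref{prop:wang-right}.
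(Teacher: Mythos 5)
Your proof is correct and is essentially identical to the paper's own argument: the paper also inducts on the depth of a node in the binary decomposition tree, handling left children via Proposition~\ref{prop:wang-left} (which consumes the induction hypothesis $G(\pset{P}) \subset G(\pset{F})$ on the parent, this being where chordality enters) and right children via Proposition~\ref{prop:wang-right} combined with transitivity, exactly as you do. The only cosmetic difference is that you take the root as the base case whereas the paper starts the induction at depth~$1$ with the root's children.
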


  \begin{proof}
     We induce on the depth $d$ of $(\pset{P}, \pset{Q}, i)$ in the binary decomposition tree. When $d=1$, then $(\pset{P}, \pset{Q}, i)$ is a child node of $(\pset{F}, \emptyset, n)$, and $G(\pset{P}) \subset G(\pset{F})$ by Proposition~\ref{prop:wang-left} if it is a left child node or by Proposition~\ref{prop:wang-right} otherwise. 
     Now assume that the first polynomial in any node of depth $d$ in the decomposition tree has an associated graph which is a subgraph of $G(\pset{F})$. Let $(\pset{P}, \pset{Q}, i)$ be of depth $d+1$ and $(\tilde{\pset{P}}, \tilde{\pset{Q}}, i)$ be its parent of depth $d$ in the decomposition. Then $G(\pset{P}) \subset G(\pset{F})$ by Proposition~\ref{prop:wang-left} if $(\pset{P}, \pset{Q}, i)$ is a left child node or $G(\pset{P}) \subset G(\tilde{\pset{P}}) \subset G(\pset{F})$ by Proposition~\ref{prop:wang-right} otherwise. This ends the inductive proof. 
  \end{proof}

  \begin{corollary}\label{cor:wang-ts}
    Let $\pset{F} \subset \kx$ be a chordal polynomial set with $x_1 < \cdots < x_n$ as one perfect elimination ordering and $\pset{T}_1, \ldots, \pset{T}_r$ be the triangular sets computed by Wang's method applied to $\pset{F}$. Then $G(\pset{T}_i) \subset G(\pset{F})$ for $i=1, \ldots, r$. 
  \end{corollary}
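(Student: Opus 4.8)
The plan is to obtain this corollary as a direct consequence of Theorem~\ref{thm:wang}, since the latter already establishes the inclusion $G(\pset{P}) \subset G(\pset{F})$ for the first component $\pset{P}$ of \emph{every} node $(\pset{P}, \pset{Q}, i)$ of the binary decomposition tree for Wang's method. The only genuine work is to identify each computed triangular set $\pset{T}_1, \ldots, \pset{T}_r$ with the first component of a suitable node, after which the subgraph relation transfers immediately. So the first step I would take is to trace through Algorithm~\ref{alg:wang} and locate precisely where the output triangular sets are produced.

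Inspecting the control flow, a triangular set enters the output $\Psi$ exactly when a tuple $(\pset{P}, \pset{Q}, i)$ with $i=0$ is extracted by $\pop(\Phi)$ and the pair $(\pset{P}, \pset{Q})$ is appended to $\Psi$; at that point $\pset{P}$ satisfies $\#(\pset{P}^{(j)}) \in \{0,1\}$ for all $j$, so (assuming no nonzero constant) $\pset{P}$ is a triangular set after reordering. Thus each $\pset{T}_k$ equals the first component $\pset{P}$ of a terminal node $(\pset{P}, \pset{Q}, 0)$ in the decomposition tree. I would note that the final filtering loop, which discards pairs whose $\pset{P}$ contains a nonzero constant, only removes nodes and hence cannot enlarge the collection of retained triangular sets; consequently every surviving $\pset{T}_k$ is still the first component of an honest node of the tree.

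With this identification in hand, I would simply invoke Theorem~\ref{thm:wang}: since $(\pset{T}_k, \pset{Q}, 0)$ is a node of the binary decomposition tree for Wang's method applied to the chordal set $\pset{F}$ (with $x_1 < \cdots < x_n$ a perfect elimination ordering), the theorem yields $G(\pset{T}_k) \subset G(\pset{F})$. As this argument is uniform in $k$, it gives $G(\pset{T}_k) \subset G(\pset{F})$ for all $k = 1, \ldots, r$, which is exactly the claim of Corollary~\ref{cor:wang-ts}.

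I do not anticipate a substantive obstacle here: the analytic content resides entirely in Theorem~\ref{thm:wang} (and, beneath it, Propositions~\ref{prop:wang-left} and~\ref{prop:wang-right} together with the chordality of $\pset{F}$). The mild point requiring care is purely bookkeeping, namely verifying that the triangular sets returned in $\Psi$ really are first components of tree nodes rather than objects assembled by some separate post-processing, and that the $\pset{Q}$-component plays no role in the graph structure---a fact already remarked upon when the algorithm was restated. Once that correspondence is made explicit, the corollary follows with no further computation.
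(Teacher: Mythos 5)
Your proposal is correct and matches the paper's own argument: the paper likewise deduces the corollary directly from Theorem~\ref{thm:wang} by observing that each output triangular set $\pset{T}_i$ is the first component of a node $(\pset{T}_i, \pset{Q}_i, 0)$ of the decomposition tree containing no non-zero constant. Your additional bookkeeping about the final filtering loop and the irrelevance of the $\pset{Q}$-component is a harmless elaboration of the same reasoning.
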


  \begin{proof}
    Straightforward from Theorem~\ref{thm:wang} with the fact that each triangular set $\pset{T}_j$ for some $i~(1\leq i \leq r)$ is from a node $(\pset{T}_i, \pset{Q}_i, 0)$ in the decomposition tree such that $\pset{T}_i$ contains no non-zero constant. 
  \end{proof}

\subsection{An illustrative example}
\label{sec:wangExample}

Here we illustrate the changes of chordality of polynomials computed in the triangular decomposition via Wang's method applied to
\begin{equation}
  \label{eq:illus}
\pset{F} = \{x_2+x_1+2, (x_2+2)x_3+x_1, (x_3+x_2)x_4 + x_3 -1, x_4+x_2\}
\end{equation}
in $\qnum[x_1, x_2, x_3, x_4]$ for the variable ordering $x_1 < x_2 < x_3 < x_4$. The associated graph $G(\pset{F})$ is shown in Figure~\ref{fig:exa1}, and one can check that $G(\pset{F})$ is chordal with $x_1 < x_2 < x_3 < x_4$ as one perfect elimination ordering. 

\begin{figure}[ht]
      \centering
\includegraphics[width=3cm,keepaspectratio]{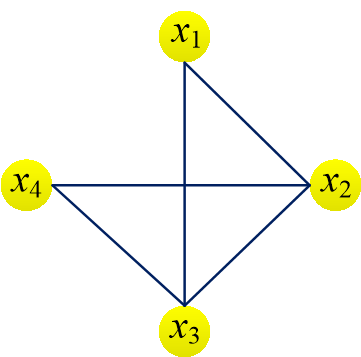}    
      \caption{The associated graph $G(\pset{F})$ with $\pset{F}$ from \eqref{eq:illus}}
      \label{fig:exa1}
    \end{figure}

First $T = (x_3+x_2)x_4+x_3-1$ is chosen as the polynomial in $\pset{F}^{(4)}$ with minimal degree in $x_4$, then a new polynomial set $\pset{F}' = \{x_2+x_1+2, (x_2+2)x_3+x_1, (x_3+x_2), x_3-1, x_4+x_2\}$, which corresponds to the right child node for the case $\ini(T) = x_3+x_2 = 0$ in the binary decomposition tree, is added to $\Phi$ for further computation. Psuedo division of $x_4+x_2$ over $T$ with respect to $x_4$ is performed to result in 
$$\pset{P} = \{x_2+x_1+2, (x_2+2)x_3+x_1, (x_2-1)x_3 + x_2^2 +1, (x_3+x_2)x_4 + x_3 -1\},$$
and thus the left child node is $(\pset{P}, \{x_3+x_2\}, 3)$ in the binary tree.

Next $T' = ((x_2+2)x_3+x_1)$ is chosen as the polynomial in $\pset{P}^{(3)}$ with minimal degree in $x_3$, then a new polynomial set $\pset{F}'' = \{x_1, x_2+x_1+2, x_2+2, (x_2-1)x_3 + x_2^2 +1, (x_3+x_2)x_4 + x_3 -1\}$ is added to $\Phi$, and the pseudo division of $(x_2-1)x_3 + x_2^2 +1$ over $(x_2+2)x_3+x_1$ results in 
$$\pset{P}' = \{x_2+x_1+2, x_2^3 + 2x_2^2 - (x_1-1)x_2 + x_1+2, (x_2+2)x_3+x_1, (x_3+x_2)x_4 + x_3 -1\}$$
and the left node is $(\pset{P}', \{x_x+x_2, x_2+2\}, 2)$.

At this step $T'' = x_2+x_1+2$ is chosen as the polynomial in $\pset{P}'^{(2)}$ with minimal degree in $x_2$, then no polynomial set is added to $\Phi$ since $\ini(T'') = 1$ and the pseudo-division of $x_2^3 + 2x_2^2 - (x_1-1)x_2 + x_1+2$ over $T''$ with respect to $x_2$ results in the first triangular set 
\begin{equation}
  \label{eq:illus-T1}
\pset{T}_1 = [-x_1^3 + x_1^2 + 14x_1 + 16, x_2+x_1+2, (x_2+2)x_3+x_1, (x_3+x_2)x_4 + x_3 -1].
\end{equation}
With similar treatments on $\pset{F}'$ and $\pset{F}''$ in $\Phi$, the other two triangular sets 
\begin{equation}
  \label{eq:illus-T23}
  \begin{split}
    \pset{T}_2 &= [x_1+1, x_2+1, x_3-1, x_4+x_2], \\
    \pset{T}_3 &= [x_1, x_2+2, (x_2-1)x_3+x_2^2+1, (x_3+x_2)x_4+x_3-1]
  \end{split}
\end{equation}
are computed. 

The associated graphs of all these three computed triangular sets are shown in Figure~\ref{fig:exa2}. One can find that the associated graphs $G(\pset{F})$ and $G(\pset{T}_1)$ are the same, while $G(\pset{T}_2)$ and $G(\pset{T}_3)$ are strict subgraphs of $G(\pset{F})$.

\begin{figure}[ht]
      \centering
\includegraphics[width=3cm,keepaspectratio]{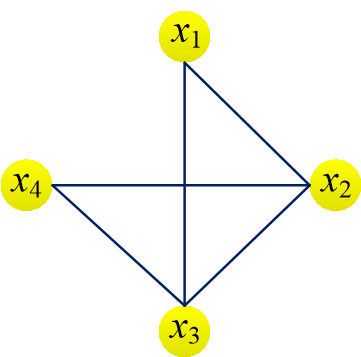}~\qquad
\includegraphics[width=3cm,keepaspectratio]{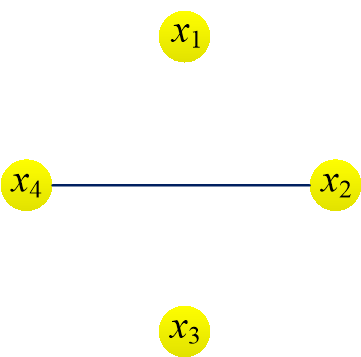}~\qquad
\includegraphics[width=3cm,keepaspectratio]{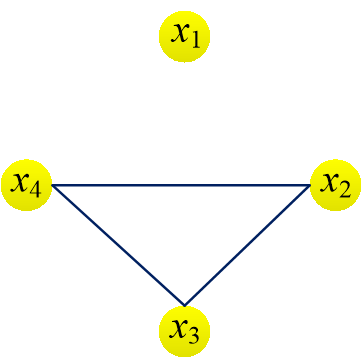}
      \caption{The associated graphs $G(\pset{T}_1)$ (left), $G(\pset{T}_2)$ (middle), and $G(\pset{T}_3)$ (right) with $\pset{T}_1$ from \eqref{eq:illus-T1} and $\pset{T}_2, \pset{T}_3$ from \eqref{eq:illus-T23}}
      \label{fig:exa2}
    \end{figure}

\section{Further remarks on the applications of graphs structures of polynomial sets}
\label{sec:discussion}

\subsection{Variable sparsity of polynomial sets}
\label{sec:sparsity}

When referring to a polynomial set $\pset{F} \subset \kx$ to be sparse, one usually means that the percentage of terms effectively appearing in $\pset{F}$ in all the possible terms in the variables $x_1, \ldots, x_n$ up to a certain degree is low. This kind of sparsity for polynomial sets is convenient for the computation of \grobner bases which is essentially based on reduction with respect to terms. In fact, efficient algorithms for computing \grobner bases for sparse polynomial sets defined in this way have been proposed, implemented, and analyzed \cite{FSS2014s}. 

Instead of terms of polynomials, triangular sets focus on the variables of polynomials. As exploited in \cite{C2017c}, sparsity of the polynomial sets with respect to their variables are partially reflected in their associated graphs. To make it precise, let $G(\pset{F}) = (V, E)$ be the associated graph of a polynomial set $\pset{F} = \{F_1, \ldots, F_r\} \subset \kx$. Then the \emph{variable sparsity} $s_v(\pset{F})$ of $\pset{F}$ can be defined as 
$$s_v(\pset{F}) = |E| / \binom{2}{|V|},$$
where the denominator is the number of edges of a complete graph composed of $|V|$ vertexes. 

Furthermore, the associated graph $G(\pset{F})$ can be extended to a weighted one $G^w(\pset{F})$ by associating the number $\#\{F\in \pset{F}: x_i, x_j \in \supp(F)\}$ to each edge $(x_i, x_j)$ of $G(\pset{F})$. Let $G^{w}(\pset{F}) = (V, E)$, with the weight $w_e$ for each $e\in E$, be the weighted associated graph of $\pset{F}$. Then the \emph{weighted variable sparsity} $s_v^w(\pset{F})$ of $\pset{F}$ can be defined as
$$ s_v^w(\pset{F}) = \frac{ \sum_{e\in E} w_e}{r\cdot \binom{2}{|V|}},$$
where $r$ is the number of polynomials in $\pset{F}$.

\subsection{Complexity analysis for triangular decomposition in top-down style}
\label{sec:complexity}
In general, due to the complicated behaviors in the decomposition process, the complexity of triangular decomposition is not as clearly known as that of computation of \grobner bases \cite{A1999c,P2013o,HSL2014o,BFS04,BFSS2013o,FM17}. 

For a graph $G$, another graph $G'$ is called a \emph{chordal completion} if $G'$ is chordal with $G$ as its subgraph. The \emph{treewidth} of a graph $G$ is defined to be the minimum of the sizes of the largest cliques in all the possible chordal completions of $G$. It has been shown that many NP-complete problems related to graphs can be solved efficiently if the graphs have bounded treewidth \cite{AP1989l}. 

As shown by Theorem~\ref{thm:wang}, when the input polynomial set is chordal, the associated graphs of all the polynomials in the decomposition process of Wang's method are subgraphs of the chordal associated graph of the input polynomial set. In other words, the input chordal associated graph imposes some kind of upper bound for all the polynomials in the decomposition process. Furthermore, the complexity of computing \grobner bases has been analyzed for polynomial systems by using the treewidth of their associated graphs \cite{C2016e}. These comments lead to the hope of refined complexity analysis of triangular decomposition in top-down style, especially on Wang's method, from the viewpoint of chordal graphs and their treewidth.

\smallskip
\noindent {\bf Acknowledgements} The authors would like to thank Dongming Wang and Diego Cifuentes for helpful discussions on Wang's methods for triangular decomposition and on chordal graph structures of polynomial sets in triangular decomposition respectively.


\begin{thebibliography}{10}

\bibitem{AP1989l}
Stefan Arnborg and Andrzej Proskurowski.
\newblock Linear time algorithms for {NP-hard} problems restricted to partial
  $k$-trees.
\newblock {\em Discrete Appl. Math.}, 23(1):11--24, 1989.

\bibitem{a99t}
Philippe Aubry, Daniel Lazard, and Marc Moreno~Maza.
\newblock On the theories of triangular sets.
\newblock {\em J. Symbolic Comput.}, 28(1--2):105--124, 1999.

\bibitem{a99ta}
Philippe Aubry and Marc Moreno~Maza.
\newblock Triangular sets for solving polynomial systems: {A} comparative
  implementation of four methods.
\newblock {\em J. Symbolic Comput.}, 28(1--2):125--154, 1999.

\bibitem{BFS04}
Magali Bardet, Jean-Charles Faug{\`e}re, and Bruno Salvy.
\newblock On the complexity of {Gr{\"o}bner} basis computation of semi-regular
  overdetermined algebraic equations.
\newblock In {\em International Conference on Polynomial System Solving -
  ICPSS}, pages 71 --75, 2004.

\bibitem{BFSS2013o}
Magali Bardet, Jean-Charles Faug{\`e}re, Bruno Salvy, and Pierre-Jean
  Spaenlehauer.
\newblock On the complexity of solving quadratic {Boolean} systems.
\newblock {\em J. Complexity}, 29(1):53--75, 2013.

\bibitem{B1965A}
Bruno Buchberger.
\newblock {\em {Ein Algorithmus zum Auffinden der Basiselemente des
  Restklassenrings nach einem nulldimensionalen Polynomideal}}.
\newblock PhD thesis, Universit{\"a}t Innsbruck, Austria, 1965.

\bibitem{c08c}
Fengjuan Chai, Xiao-Shan Gao, and Chunming Yuan.
\newblock A characteristic set method for solving {Boolean} equations and
  applications in cryptanalysis of stream ciphers.
\newblock {\em J. Syst. Sci. Complex.}, 21(2):191--208, 2008.

\bibitem{CM2012a}
Changbo Chen and Marc Moreno~Maza.
\newblock Algorithms for computing triangular decompositions of polynomial
  systems.
\newblock {\em J. Symbolic Comput.}, 47(6):610--642, 2012.

\bibitem{C2016e}
Diego Cifuentes and Pablo~A Parrilo.
\newblock Exploiting chordal structure in polynomial ideals: A \grobner bases
  approach.
\newblock {\em SIAM J. Discrete Math.}, 30(3):1534--1570, 2016.

\bibitem{C2017c}
Diego Cifuentes and Pablo~A Parrilo.
\newblock Chordal networks of polynomial ideals.
\newblock {\em SIAM J. Appl. Algebra Geom.}, 1(1):73--110, 2017.

\bibitem{CLO1998U}
David~A. Cox, John~B. Little, and Donal O'Shea.
\newblock {\em {Using Algebraic Geometry}}.
\newblock Springer Verlag, 1998.

\bibitem{F1999A}
Jean-Charles Faug\`{e}re.
\newblock A new efficient algorithm for computing {Gr\"{o}bner} bases
  (${F_4}$).
\newblock {\em J. Pure Appl. Algebra}, 139(1--3):61--88, 1999.

\bibitem{f03a}
Jean-Charles Faug{\`e}re and Antoine Joux.
\newblock Algebraic cryptanalysis of hidden field equation ({HFE})
  cryptosystems using {Gr{\"o}bner} bases.
\newblock In Dan Boneh, editor, {\em Advances in Cryptology -- CRYPTO 2003},
  pages 44--60. Springer, 2003.

\bibitem{FM17}
Jean-Charles Faug\`{e}re and Chenqi Mou.
\newblock Sparse fglm algorithms.
\newblock {\em J. Symbolic Comput.}, 80(3):538--569, 2017.

\bibitem{FSS2014s}
Jean-Charles Faug\`{e}re, Pierre-Jean Spaenlehauer, and Jules Svartz.
\newblock Sparse {Gr{\"o}bner} bases: The unmixed case.
\newblock In Katsusuke Nabeshima and Kosaku Nagasaka, editors, {\em Proceedings
  of ISSAC 2014}, pages 178--185. ACM, 2014.

\bibitem{GC1992s}
Xiao-Shan Gao and Shang-Ching Chou.
\newblock Solving parametric algebraic systems.
\newblock In Paul Wang, editor, {\em Proceedings of ISSAC 1992}, pages
  335--341. ACM, 1992.

\bibitem{g12c}
Xiao-Shan Gao and Zhenyu Huang.
\newblock Characteristic set algorithms for equation solving in finite fields.
\newblock {\em J. Symbolic Comput.}, 47(6):655--679, 2012.

\bibitem{G1994p}
John~R. Gilbert.
\newblock Predicting structure in sparse matrix computations.
\newblock {\em SIAM J. Matrix Anal. Appl.}, 15(1):62--79, 1994.

\bibitem{HSL2014o}
Zhenyu Huang, Yao Sun, and Dongdai Lin.
\newblock On the efficiency of solving {Boolean} polynomial systems with the
  characteristic set method.
\newblock {\em arXiv preprint arXiv:1405.4596}, 2014.

\bibitem{JLW2013a}
Meng Jin, Xiaoliang Li, and Dongming Wang.
\newblock A new algorithmic scheme for computing characteristic sets.
\newblock {\em J. Symbolic Comput.}, 50:431--449, 2013.

\bibitem{k93g}
Michael Kalkbrener.
\newblock A generalized {Euclidean} algorithm for computing triangular
  representations of algebraic varieties.
\newblock {\em J. Symbolic Comput.}, 15(2):143--167, 1993.

\bibitem{LI2010D}
Xiaoliang Li, Chenqi Mou, and Dongming Wang.
\newblock Decomposing polynomial sets into simple sets over finite fields: The
  zero-dimensional case.
\newblock {\em Comput. Math. Appl.}, 60(11):2983--2997, 2010.

\bibitem{P1961t}
Seymour Parter.
\newblock The use of linear graphs in {Gauss} elimination.
\newblock {\em SIAM Rev.}, 3(2):119--130, 1961.

\bibitem{P2013o}
Adrien Poteaux and {\'E}ric Schost.
\newblock On the complexity of computing with zero-dimensional triangular sets.
\newblock {\em J. Symbolic Comput.}, 50:110--138, 2013.

\bibitem{R1970t}
Donald~J. Rose.
\newblock Triangulated graphs and the elimination process.
\newblock {\em J. Math. Anal. Appl.}, 32(3):597--609, 1970.

\bibitem{A1999c}
Agnes Szanto.
\newblock {\em {Computation with Polynomial Systems}}.
\newblock PhD thesis, Cornell University, USA, 1999.

\bibitem{w93e}
Dongming Wang.
\newblock An elimination method for polynomial systems.
\newblock {\em J. Symbolic Comput.}, 16(2):83--114, 1993.

\bibitem{w98d}
Dongming Wang.
\newblock Decomposing polynomial systems into simple systems.
\newblock {\em J. Symbolic Comput.}, 25(3):295--314, 1998.

\bibitem{w00c}
Dongming Wang.
\newblock Computing triangular systems and regular systems.
\newblock {\em J. Symbolic Comput.}, 30(2):221--236, 2000.

\bibitem{W2001E}
Dongming Wang.
\newblock {\em {Elimination Methods}}.
\newblock Springer-Verlag, Wien, 2001.

\bibitem{w86z}
Wen-Tsun Wu.
\newblock On zeros of algebraic equations: {An} application of {Ritt}
  principle.
\newblock {\em Kexue Tongbao}, 31(1):1--5, 1986.

\bibitem{Wu94m}
Wen-Tsun Wu.
\newblock {\em {Mechanical Theorem Proving in Geometries: Basic Principles}}.
\newblock Springer-Verlag, Wien [Translated from the Chinese by X. Jin and D.
  Wang], 1994.

\end{thebibliography}

\end{document}